\makeatletter\@ifpackageloaded{mathpazo}\@tempswatrue\@tempswafalse
  \DeclareFontFamily{OT1}{pzc}{}
  \DeclareFontShape{OT1}{pzc}{m}{it}{<-> s * [1.15] pzcmi7t}{}
  \DeclareMathAlphabet{\mathpzc}{OT1}{pzc}{m}{it}
\makeatletter\@ifpackageloaded{biblatex}{%
  \usepackage{csquotes} 
  \bibliography{references}
  \renewbibmacro{in:}{%
    \ifentrytype{incollection}{\printtext{\bibstring{in}\intitlepunct}}{}}
  \renewbibmacro{publisher+location+date}{%
    \iflistundef{publisher}
      {\setunit*{\addcomma\space}}
      {\setunit*{\addcomma\space}}%
    \printlist{publisher}%
    \setunit*{\addcomma\space}%
    \printlist{location}%
    \setunit*{\addcomma\space}%
    \usebibmacro{date}%
    \newunit}
  \DeclareFieldFormat[article]{pages}{#1\isdot}
  \DeclareFieldFormat[article,incollection,inproceedings,unpublished,eprint]{title}{#1\isdot}
  \DeclareFieldFormat[thesis]{title}{\mkbibemph{#1\isdot}}
  \DeclareFieldFormat[unpublished]{date}{(#1)\isdot}
  \DeclareFieldFormat[unpublished]{note}{#1\nopunct} 
  \DeclareFieldFormat[eprint]{date}{(#1)\isdot}
  \DeclareFieldFormat[eprint]{note}{#1\nopunct} 
  \DeclareFieldFormat[article]{journaltitle}{\mkbibemph{#1\isdot}}
  
  \AtEveryBibitem{%
    \ifentrytype{book}{}{
      \clearname{editor}
    }
  }
  \newbibmacro*{bbx:parunit}{%
    \ifbibliography
      {\setunit{\bibpagerefpunct}\newblock
       \usebibmacro{pageref}%
       \clearlist{pageref}%
       \setunit{\adddot\par\nobreak}}
      {}
  }
  \renewbibmacro*{doi+eprint+url}{%
    \usebibmacro{bbx:parunit}
    \iftoggle{bbx:doi}
      {\printfield{doi}}
      {}%
    \iftoggle{bbx:eprint}
      {\usebibmacro{eprint}}
      {}%
    \iftoggle{bbx:url}
      {\usebibmacro{url+urldate}}
      {}
  }
  \renewbibmacro*{eprint}{%
    \usebibmacro{bbx:parunit}
    \iffieldundef{eprinttype}
      {\printfield{eprint}}
      {\printfield[eprint:\strfield{eprinttype}]{eprint}}
  }
  \renewbibmacro*{url+urldate}{%
    \usebibmacro{bbx:parunit}
    \printfield{url}%
    \iffieldundef{urlyear}
      {}
      {\setunit*{\addspace}%
       \printtext[urldate]{\printurldate}}
  }
}{}\makeatother
\declaretheorem[numberwithin=section,refname={theorem,theorems},Refname={Theorem,Theorems}]{theorem}
\declaretheorem[sibling=theorem,style=definition]{definition}
\declaretheorem[sibling=theorem,name=Lemma]{lemma}
\declaretheorem[sibling=theorem,name=Proposition]{proposition}
\declaretheorem[sibling=theorem,style=definition,name=Example]{example}
\declaretheorem[numbered=no,name=Question]{question}
\makeatletter\@ifpackageloaded{hyperref}{%
  \usepackage{xcolor}
  \definecolor{dark-red}{rgb}{0.4,0.15,0.15}
  \definecolor{dark-blue}{rgb}{0.15,0.15,0.4}
  \definecolor{medium-blue}{rgb}{0,0,0.5}
  \hypersetup{
    colorlinks,
    linkcolor={dark-red},
    citecolor={dark-blue},
    urlcolor={medium-blue}%
  }

}{}\makeatother
\providecommand{\abs}[1]{\lvert#1\rvert}
\providecommand{\Abs}[1]{\left\lvert#1\right\rvert}
\providecommand{\norm}[1]{\lVert#1\rVert}
\providecommand{\floor}[1]{\lfloor#1\rfloor}
\providecommand{\Floor}[1]{\left\lfloor#1\right\rfloor}
\newcommand{\infw}[1]{%
  \ifcat\noexpand#1\relax\bm{#1}
  \else\mathbf{#1}\fi}          
\newcommand{\lags}[1]{\mathcal{L}_{#1}}
\newcommand{\abexp}[2]{\mathpzc{A\mkern-3mu e}_{#1}(#2)}
\newcommand{\act}[2]{\mathpzc{A\mkern-3mu c}_{#1}(#2)}
\newcommand{\Z}{\mathbb{Z}}
\newcommand{\R}{\mathbb{R}}
\newcommand{\T}{\mathbb{T}}
\newcommand{\keywords}[1]{\par\noindent{\footnotesize{\em Keywords\/}: #1}}
\begin{document}
  \title{On $k$-abelian Equivalence and Generalized Lagrange Spectra}
  \author[,1,2,3]{Jarkko Peltomäki\footnote{Corresponding author.\\E-mail addresses: \href{mailto:r@turambar.org}{r@turambar.org} (J. Peltomäki), \href{mailto:mawhit@utu.fi}{mawhit@utu.fi} (M. A. Whiteland).}}
  \affil[1]{The Turku Collegium for Science and Medicine TCSM, University of Turku, Turku, Finland}
  \affil[2]{Turku Centre for Computer Science TUCS, Turku, Finland}
  \affil[3]{University of Turku, Department of Mathematics and Statistics, Turku, Finland}
  \author[3]{Markus A. Whiteland}
  \date{}
  \maketitle
  \vspace{-1.5em}
  \noindent
  \hrulefill
  \begin{abstract}
    \vspace{-1em}
    \noindent
    We study the set of $k$-abelian critical exponents of all Sturmian words. It has been proven that in the case
    $k = 1$ this set coincides with the Lagrange spectrum. Thus the sets obtained when $k > 1$ can be viewed as
    generalized Lagrange spectra. We characterize these generalized spectra in terms of the usual Lagrange spectrum and
    prove that when $k > 1$ the spectrum is a dense non-closed set. This is in contrast with the case $k = 1$, where
    the spectrum is a closed set containing a discrete part and a half-line. We describe explicitly the least
    accumulation points of the generalized spectra. Our geometric approach allows the study of $k$-abelian powers in
    Sturmian words by means of continued fractions.

    \vspace{1em}
    \keywords{Sturmian word, $k$-abelian equivalence, Lagrange spectrum, continued fraction}
    \vspace{-1em}
  \end{abstract}
  \hrulefill

  \section{Introduction}
  The critical exponent of an infinite word $\infw{w}$ is the supremum of exponents of fractional powers occurring in
  $\infw{w}$. Famously Thue showed in 1906 \cite{1906:uber_unendliche_zeichenreihen} that the fixed point of the
  substitution $0 \mapsto 01$, $1 \mapsto 10$, now known as the Thue-Morse word for Morse's independent contribution
  \cite{1921:recurrent_geodesics_on_a_surface_of_negative_curvature}, has critical exponent $2$ meaning that it avoids
  powers with exponent at least $3$. The notion of critical exponent is central in the study of powers and their
  avoidance which have since Thue been a central theme in the area of combinatorics on words.

  Another important subject in combinatorics on words is the theory of Sturmian words. Sturmian words comprise a large
  class of extensively studied words with strong connections to number theory, particularly to continued fractions
  (see, e.g., \cite{2007:sturmian_and_episturmian_words}, \cite[Chapter~2]{2002:algebraic_combinatorics_on_words},
  \cite[Chapter~6]{2002:substitutions_in_dynamics_arithmetics_and_combinatorics} and the references therein). The
  powers occurring in Sturmian words are well-understood, and a formula for the critical exponent of a Sturmian word
  was determined by Damanik and Lenz \cite{2002:the_index_of_sturmian_sequences} and Justin and Pirillo
  \cite{2001:fractional_powers_in_sturmian_words}. For example, the critical exponent of the Fibonacci word, the fixed
  point of the substitution $0 \mapsto 01$, $1 \mapsto 0$, is $(5 + \sqrt{5})/2$ as was already derived in
  \cite{1992:repetitions_in_the_fibonacci_infinite_word}. The critical exponent of the Fibonacci word is minimal among
  all Sturmian words.

  In recent years, there has been a substantial amount of research in generalizations of the concept of a power. A
  popular generalization is that of an abelian power; other generalizations are $k$-abelian powers (see below) and
  those based on $k$-binomial equivalence \cite{2015:another_generalization_of_abelian_equivalence_binomial}. Two words
  $u$ and $v$ are abelian equivalent, written $u \sim_1 v$, if one is obtained from the other by permuting letters. If
  $u_0$, $u_1$, $\ldots$, $u_{n-1}$ are abelian equivalent words of length $m$, then their concatenation
  $u_0 u_1 \cdots u_{n-1}$ is an abelian power of exponent $n$ and period $m$ (only integer exponents are considered).
  Thus an abelian power is a generalization of the usual notion of a power: the abelian equality relation is used in place
  of the usual equality relation. Questions regarding abelian powers were already raised by Erd{\H o}s in 1957
  \cite{1957:some_unsolved_problems}. More recently there has been a burst of activity on the subject starting,
  perhaps, with the 2011 paper \cite{2011:abelian_complexity_of_minimal_subshifts} by Richomme, Saari, and Zamboni.
  See, e.g., the references of \cite{2016:abelian_powers_and_repetitions_in_sturmian_words} and especially the papers
  \cite{2013:abelian_returns_in_sturmian_words,2013:some_properties_of_abelian_return_words,2016:abelian_powers_and_repetitions_in_sturmian_words,2017:abelian-square-rich_words}
  related to Sturmian words.

  The first author studied with Fici et al. the abelian critical exponents of Sturmian words in
  \cite{2016:abelian_powers_and_repetitions_in_sturmian_words}, where it was shown that there are abelian powers of
  arbitrarily high exponent starting at each position of a Sturmian word, a result also obtained in
  \cite{2011:abelian_complexity_of_minimal_subshifts}. This means that directly generalizing the notion of a critical
  exponent to the abelian setting only in terms of the exponent does not produce a quantity of interest (at least for
  Sturmian words). Thus an alternative definition was adopted in
  \cite{2016:abelian_powers_and_repetitions_in_sturmian_words}. The abelian critical exponent of an infinite word
  $\infw{w}$ is defined as the quantity
  \begin{equation}\label{eq:informal}
    \limsup_{m \to \infty} \left\{ \frac{n}{m} : \text{$u$ is an abelian power of exponent $n$ and period $m$ occurring in $\infw{w}$} \right\}
  \end{equation}
  measuring the maximal ratio between the exponents and periods of abelian powers in $\infw{w}$. This alternative
  definition does lead to an interesting quantity. The abelian critical exponent of a Sturmian word can be finite or
  infinite, and again the Fibonacci word has minimal exponent; this time the value being $\sqrt{5}$.
  
  Surprisingly, the set of abelian critical exponents of all Sturmian words turns out to coincide with the Lagrange
  spectrum. The Lagrange constant of an irrational $\alpha$ is the infimum of the real numbers $\lambda$ such that for
  every $c > \lambda$ the inequality $\abs{\alpha - n/m} < 1/cm^2$ has only finitely many rational solutions $n/m$. The
  Lagrange constant $\lambda(\alpha)$ of $\alpha$ is computed as follows:
  \begin{align*}
    \lambda(\alpha) &= \limsup_{t \to \infty} (q_t \norm{q_t\alpha})^{-1} \\
                    &= \limsup_{t \to \infty} ([a_{t+1}; a_{t+2}, \ldots] + [0; a_t, a_{t-1}, \ldots, a_1]),
  \end{align*}
  where $[a_0; a_1, a_2, \ldots]$ is the continued fraction expansion of $\alpha$ and $(q_k)$ is the sequence of
  denominators of its convergents ($\norm{x}$ measures the distance of $x$ to the nearest integer). The connection here
  is that for a fixed Sturmian word, the number $n$ in \eqref{eq:informal}, when maximal, equals the integer part of
  $1/\norm{m\alpha}$ for a certain irrational $\alpha$ (for details, see \autoref{sec:preliminaries} and
  \autoref{lem:max_exponent}).
  
  The Lagrange spectrum is the set of finite Lagrange constants of irrational numbers. The Lagrange spectrum has been
  studied extensively, but many of its properties still remain a mystery. The spectrum has a curious structure: its
  initial part inside the interval $[\sqrt{5}, 3)$ is discrete as shown by Markov already in late 19th century
  \cite{1879:sur_les_formes_quadratiques_binaires_indefinies,1880:sur_les_formes_quadratiques_binaires_indefinies_ii},
  but it contains a half-line as was famously proven by Hall in 1947
  \cite{1947:on_the_sum_and_products_of_continued_fractions}. Good sources for information on the Lagrange spectrum are
  the monograph of Cusick and Flahive \cite{1989:the_markoff_and_lagrange_spectra} and Aigner's book
  \cite{2013:markovs_theorem_and_100_years_of_the_uniqueness}.

  Another relatively recent development in combinatorics on words is the systematic study of a generalization of
  abelian equivalence called $k$-abelian equivalence initiated by Karhumäki, Saarela, and Zamboni in
  \cite{2013:on_a_generalization_of_abelian_equivalence_and_complexity_of_infinite}. This generalization originally
  appears in a 1980 paper of Karhumäki \cite{1980:generalized_parikh_mappings_and_homomorphisms}. Two words $u$ and $v$
  are said to be $k$-abelian equivalent, written $u \sim_k v$, if $\abs{u}_w = \abs{v}_w$ for each nonempty word $w$ of
  length at most $k$ (here $\abs{u}_w$ stands for the number of occurrences of $w$ as a factor of $u$). Thus
  $1$-abelian equivalence is simply the abelian equivalence discussed above. The $k$-abelian equivalence relation is
  clearly an equivalence relation, but it is also a congruence relation. For $k = 1, 2, \ldots$, the corresponding
  $k$-abelian equivalence relations can be seen as refinements of the abelian equivalence relation approaching the
  usual equality relation. The $k$-abelian equivalence has been studied especially from the points of view of factor
  complexity and power avoidance; for more information, see the recent paper
  \cite{2017:on_growth_and_fluctuation_of_k_abelian_complexity} and its references.

  The purpose of the current paper is to generalize the research of
  \cite{2016:abelian_powers_and_repetitions_in_sturmian_words} on abelian critical exponents of Sturmian words to the
  $k$-abelian setting. That is, we use the general $k$-abelian equivalence in place of abelian equivalence to obtain
  the notion of $k$-abelian critical exponent and study the set $\lags{k}$ of $k$-abelian critical exponents of
  Sturmian words. As $\lags{1}$ is the Lagrange spectrum, the sets $\lags{k}$ for $k > 1$ can be seen as combinatorial
  generalizations of the Lagrange spectrum.

  Our main contribution is the characterization of the $k$-Lagrange spectrum $\lags{k}$ in terms of the Lagrange
  spectrum $\lags{1}$. Our result, \autoref{thm:main_relation}, states that the $k$-abelian critical exponent of a
  Sturmian word $\infw{s}$ with abelian critical exponent $K$ equals $cK$ for a particular constant $c$, $0 < c < 1$,
  which depends on $k$ and $\infw{s}$. The relation between $\lags{1}$ and $\lags{k}$ is thus quite simple. However,
  the sets $\lags{k}$ inherit the complicated structure of the Lagrange spectrum $\lags{1}$. We show that for $k > 1$
  we have $\lags{k} \subseteq (\sqrt{5}/(2k-1), \infty)$, the number $\sqrt{5}/(2k-1)$ being the least accumulation
  point of $\lags{k}$ (\autoref{thm:endpoints}). Moreover, we prove that the set $\lags{k}$ is dense in
  $(\sqrt{5}/(2k-1), \infty)$ (\autoref{thm:dense}). This contrasts the case $k = 1$ where the initial part of
  $\lags{1}$ is discrete. The set $\lags{1}$ is known to contain a half-line. We do not know if $\lags{k}$ contains an
  analogous half-line for $k > 1$; we leave this problem open.
  
  Our approach is to first give an arithmetical and geometric interpretation for what it means for two factors of a
  Sturmian word to be $k$-abelian equivalent and then to employ continued fractions to derive our results. This
  approach is similar to that of \cite{2016:abelian_powers_and_repetitions_in_sturmian_words} where the usage of
  continued fractions was crucial. The arithmetical interpretation complements the combinatorial methods of
  \cite{2013:on_a_generalization_of_abelian_equivalence_and_complexity_of_infinite}: we make some results of
  \cite{2013:on_a_generalization_of_abelian_equivalence_and_complexity_of_infinite} on Sturmian words more precise. Our
  approach also makes it possible to efficiently find the possible exponents and locations of $k$-abelian powers
  occurring in a given Sturmian word.

  The paper is organized as follows. In \autoref{sec:preliminaries}, we give the necessary definitions and background
  information on Sturmian words and number theory. After this we present the main results and their proofs in
  \autoref{sec:results}. \autoref{sec:examples} provides further discussion on some matters raised in
  \autoref{sec:results}. Finally, \autoref{sec:open_problems} concludes the paper with open problems.

  \section{Preliminaries}\label{sec:preliminaries}
  We use standard terminology from combinatorics on words; we refer the reader to
  \cite{2002:algebraic_combinatorics_on_words} for any undefined terms. The words considered in this paper are finite
  or infinite binary words over the alphabet $\{0, 1\}$. We distinguish infinite words from finite words by referring
  to them with boldface symbols. By $\abs{w}$ we mean the length of the finite word $w$. The $n^\text{th}$ \emph{power}
  of a finite word $w$ is the word obtained by repeating it consecutively $n$ times, and it is denoted by $w^n$. For
  the infinite repetition of $w$, we use the notation $w^\omega$. An infinite word is \emph{ultimately periodic} if it
  can be written in the form $uv^\omega$ for some finite words $u$ and $v$; otherwise it is \emph{aperiodic}.

  We denote by $\abs{w}_u$ the number of occurrences of the nonempty word $u$ as a factor of $w$. If $u$ and $v$ are
  finite words over an alphabet $A$, then $u$ and $v$ are \emph{abelian equivalent}, written $u \sim_1 v$, if
  $\abs{u}_a = \abs{v}_a$ for each letter $a$ of $A$. Let then $k$ be a fixed positive integer. We say that $u$ and $v$
  are \emph{$k$-abelian equivalent}, written $u \sim_k v$, if $\abs{u}_w = \abs{u}_w$ for each word $w$ of length at
  most $k$. Notice that if $k = 1$, then $k$-abelian equivalence is simply the abelian equivalence. For words of length
  at least $k - 1$ we can alternatively say that $u \sim_k v$ if and only if $u$ and $v$ have a common prefix and a
  common suffix of length $k - 1$ and $\abs{u}_w = \abs{u}_w$ for each word $w$ of length $k$
  \cite[Lemma~2.3]{2013:on_a_generalization_of_abelian_equivalence_and_complexity_of_infinite}. Thus, for words of
  length at most $2k-1$, the $k$-abelian equivalence is in fact the equality relation
  \cite[Lemma~2.4]{2013:on_a_generalization_of_abelian_equivalence_and_complexity_of_infinite}. The $k$-abelian
  equivalence relation is a congruence relation. If $u_0$, $u_1$, $\ldots$, $u_{n-1}$ are $k$-abelian equivalent words
  of length $m$, then their concatenation $u_0 u_1 \cdots u_{n-1}$ is a \emph{$k$-abelian power of exponent $n$ and
  period $m$}. In this paper, we consider only nondegenerate powers, that is, we assume that $n \geq 2$.

  Recall that every irrational real number $\alpha$ has a unique infinite continued fraction expansion:
  \begin{equation}\label{eq:cf}
    \alpha = [a_0; a_1, a_2, a_3, \ldots] = a_0 + \dfrac{1}{a_1 + \dfrac{1}{a_2 + \dfrac{1}{a_3 + \ldots}}}
  \end{equation}
  with $a_0 \in \Z$ and $a_t \in \Z_+$ for $t \geq 1$. The numbers $a_i$ are called the \emph{partial quotients} of
  $\alpha$. By cutting the expansion after $t + 1$ terms, we obtain a rational number
  $[a_0; a_1, a_2, a_3, \ldots, a_t]$, which we denote by $p_t / q_t$. These rationals $p_t / q_t$ are the
  \emph{convergents} of $\alpha$. The convergents of $\alpha$ satisfy the best approximation property, that is,
  \begin{equation*}
    \norm{q_t \alpha} = \min_{0 < m \leq q_{t+1}} \norm{m\alpha}
  \end{equation*}
  for all $t \geq 1$. Here $\norm{x}$ measures the distance of $x$ to the nearest integer. In other words,
  $\norm{x} = \min\{ \{x\}, 1 - \{x\} \}$, where $\{x\}$ denotes the fractional part of $x$. Two numbers with continued
  fraction expansions $[a_0; a_1, \ldots]$ and $[b_0; b_1, \ldots]$ are \emph{equivalent} if there exist integers $N$
  and $M$ such that $a_{N+i} = b_{M+i}$ for all $i \geq 0$. As we shall see later, continued fractions are useful in
  studying Sturmian words (defined below). More details on the connection to Sturmian words can be found, e.g., in
  \cite[Chapter~4]{diss:jarkko_peltomaki}.
  
  Let $\alpha$ be an irrational number, and define the \emph{Lagrange constant} $\lambda(\alpha)$ of $\alpha$ as the
  infimum of real numbers $\lambda$ such that for every $c > \lambda$ the inequality
  \begin{equation}\label{eq:ls}
    \Abs{\alpha - \frac{p}{q}} < \frac{1}{cq^2}
  \end{equation}
  has only finitely many rational solutions $p/q$. Famously Hurwitz's Theorem states that
  $\lambda(\alpha) \geq \sqrt{5}$ for any irrational $\alpha$, and there exists numbers with
  $\lambda(\alpha) = \sqrt{5}$. The numbers with finite Lagrange constant are often called \emph{badly approximable
  numbers} in the literature. The Lagrange constant of $\alpha$ with continued fraction expansion as in \eqref{eq:cf}
  is computed as follows:
  \begin{equation}\label{eq:lagrange}
    \lambda(\alpha) = \limsup_{t \to \infty} ([a_{t+1}; a_{t+2}, \ldots] + [0; a_t, a_{t-1}, \ldots, a_1]).
  \end{equation}
  From this formula, it is clear that two equivalent numbers have the same Lagrange constant. The \emph{Lagrange
  spectrum} is the set of finite Lagrange constants. This set has many curious properties, and we shall return to them
  later at the end of \autoref{ssec:k-lagrange_spectrum}. For details on the Lagrange spectrum, see
  \cite{1989:the_markoff_and_lagrange_spectra} or \cite{2013:markovs_theorem_and_100_years_of_the_uniqueness}.
  
  Sturmian words are defined as the codings of orbits of points in an irrational circle rotation with two intervals.
  This understanding is sufficient for our purposes, but many other viewpoints exist; see, e.g.,
  \cite{2002:substitutions_in_dynamics_arithmetics_and_combinatorics,2002:algebraic_combinatorics_on_words}. Identify
  the unit interval $[0,1)$ with the unit circle $\T$, and let $\alpha$ be a fixed irrational. The mapping
  $R\colon \T \to \T, \, x \mapsto \{x + \alpha\}$ defines a rotation on $\T$. Partition the circle $\T$ into two
  intervals $I_0$ and $I_1$ defined by the points $0$ and $\{1-\alpha\}$. Let $\nu$ be the coding function defined by
  setting $\nu(x) = 0$ if $x \in I_0$ and $\nu(x) = 1$ if $x \in I_1$. Define $\infw{s}_{x,\alpha}$ as the infinite
  word obtained by setting its $n^\text{th}, n \geq 0,$ letter to equal $\nu(R^n(x))$. The word $\infw{s}_{x,\alpha}$
  is called the \emph{Sturmian word of slope $\alpha$ and intercept $x$}.

  The above definition is not complete because we did not define how $\nu$ behaves in the endpoints $0$ and
  $\{1-\alpha\}$. There is some choice here, and we take either $I_0 = [0,\{1-\alpha\})$ and $I_1 = [\{1-\alpha\},1)$
  or $I_0 = (0,\{1-\alpha\}]$ and $I_1 = (\{1-\alpha\},1]$. These options are determined by whether or not $0 \in I_0$.
  This little detail makes no difference to us: only interior points of intervals are considered. Let $x, y \in \T$
  with $x < y$. Then by both $I(x, y)$ and $I(y, x)$ we mean the interval $[x, y)$ if $0 \in I_0$ and the interval
  $(x, y]$ if $0 \notin I_0$.

  One particular example of a Sturmian word is the Fibonacci word $\infw{f}$. Its slope is $1/\varphi^2$, where
  $\varphi$ is the golden ratio, and its intercept equals its slope. We have
  \begin{equation*}
    \infw{f} = 01001010010010100101001001010010 \cdots.
  \end{equation*}
  This word is also the fixed point of the substitution $0 \mapsto 01$, $1 \mapsto 0$.
  
  The sequence $(\{n\alpha\})_{n\geq 0}$ is dense in $[0,1)$ by Kronecker's Theorem, so Sturmian words of slope
  $\alpha$ have a common language $\mathcal{L}$ (the set of factors). Let $w$ denote a word $a_0 a_1 \cdots a_{n-1}$ of
  length $n$ in $\mathcal{L}$. Then there exists a unique subinterval $[w]$ of $\T$ such that the Sturmian word
  $\infw{s}_{x,\alpha}$ begins with $w$ if and only if $x \in [w]$. Clearly
  $[w] = I_{a_0} \cap R^{-1}(I_{a_1}) \cap \ldots \cap R^{-(n-1)}(I_{a_{n-1}})$ (here the choice of endpoints matters,
  but we only consider interior points of these intervals). The points $0$, $\{-\alpha\}$, $\{-2\alpha\}$, $\ldots$,
  $\{-n\alpha\}$ partition the circle into $n+1$ subintervals which are exactly the intervals $[w]$ for factors of
  length $n$. We call these $n+1$ intervals the \emph{level $n$ intervals}, and we denote the set containing them by
  $L(n)$. We abuse notation and write $\max L(n)$ (resp. $\min L(n)$) for the maximum (resp. minimum) length of a level
  $n$ interval.

  In the rest of this paper, we keep the slope $\alpha$ with continued fraction expansion $[a_0; a_1, a_2, \ldots]$
  fixed. Whenever we talk about the convergents $q_t$, the level $n$ intervals $L(n)$, the rotation $R$, etc., we
  implicitly understand that they relate to this fixed $\alpha$.
  
  \section{Main Results}\label{sec:results}

  \subsection{\texorpdfstring{$k$}{k}-abelian Equivalence in Sturmian Words}\label{ssec:eq_classes}
  Our first aim is to show that the $k$-abelian equivalence classes of factors of a Sturmian word correspond to certain
  intervals on the circle $\T$ and to characterize the endpoints of these intervals. We begin by recalling the
  following result of \cite{2013:on_a_generalization_of_abelian_equivalence_and_complexity_of_infinite} (specialized to
  Sturmian words).

  \begin{proposition}\label{prp:pref_suff_ab}
    \cite[Proposition~2.8]{2013:on_a_generalization_of_abelian_equivalence_and_complexity_of_infinite} Let $u$ and $v$
    be two factors of the same length occurring in some Sturmian word. Then $u \sim_k v$ if and only if they share a
    common prefix and a common suffix of length $\min\{\abs{u}, k-1\}$ and $u \sim_1 v$.
  \end{proposition}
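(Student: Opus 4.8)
The plan is to prove the two directions separately, with essentially all of the work in the converse. The forward implication is immediate: if $u \sim_k v$ then in particular $\abs{u}_a = \abs{v}_a$ for each letter $a$, so $u \sim_1 v$, while the common prefix and suffix of length $\min\{\abs{u}, k-1\}$ is furnished by the general characterization of $k$-abelian equivalence recalled above (for $\abs{u} \geq k-1$ it is the common prefix and suffix of length $k-1$, and for $\abs{u} < k-1$ the relation $\sim_k$ already coincides with equality, so $u = v$). I would also dispose of all short factors at once: if $\abs{u} \leq 2k-1$, the shared prefix and suffix of length $k-1$ determine every letter of $u$ except possibly the single position $k-1$ when $\abs{u} = 2k-1$, and that last letter is pinned down by $u \sim_1 v$; hence $u = v$ and $u \sim_k v$ trivially. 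Thus the substance of the proposition is the converse in the range $\abs{u} = m \geq 2k$, where I must upgrade $u \sim_1 v$ (together with matching prefix and suffix of length $k-1$) to the equality of all counts $\abs{u}_w = \abs{v}_w$ for factors $w$ of length $k$.

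For this I would pass to the Rauzy graph $\Gamma_{k-1}$ whose vertices are the $k$ factors of length $k-1$ and whose $k+1$ edges are the factors of length $k$, an edge $a_1 \cdots a_k$ running from $a_1 \cdots a_{k-1}$ to $a_2 \cdots a_k$. A factor of length $m$ then reads off a walk in $\Gamma_{k-1}$ from its length-$(k-1)$ prefix to its length-$(k-1)$ suffix, and for $\abs{w} = k$ the count $\abs{u}_w$ is exactly the number of times that walk traverses the edge $w$. Write $f_u, f_v$ for the edge-multiplicity vectors of the two walks. Because $u$ and $v$ share their length-$(k-1)$ prefix and suffix, the two walks have the same initial and terminal vertex, so $f_u - f_v$ has zero boundary, i.e. it is an integer circulation lying in the cycle space of $\Gamma_{k-1}$. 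Since $\Gamma_{k-1}$ is connected with $k$ vertices and $k+1$ edges, this cycle space has dimension $(k+1) - k + 1 = 2$.

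It remains to show that this circulation vanishes, and here two scalar identities cut the two-dimensional cycle space down to zero. First, $u$ and $v$ have the same length, so their walks use the same number $m - k + 1$ of edges, giving $\sum_e (f_u - f_v)(e) = 0$. Second, reading the final letters of the successive edges of the walk of $u$ spells out the letters of $u$ in positions $k-1, \ldots, m-1$, so $\sum_e \mathrm{last}(e)\, f_u(e) = \abs{u}_1 - \abs{p}_1$, where $p$ is the common length-$(k-1)$ prefix; since $u \sim_1 v$ this weighted sum agrees for $u$ and $v$, giving $\sum_e \mathrm{last}(e)(f_u - f_v)(e) = 0$. Hence $f_u - f_v$ is annihilated by the two functionals ``total length'' and ``number of edges ending in $1$''. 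The heart of the matter, and the step I expect to be the \emph{main obstacle}, is to verify that these two functionals are linearly independent on the cycle space of $\Gamma_{k-1}$. I would argue this through the return-word description of the Rauzy graph of a Sturmian word: its two independent cycles are realized by the two return words to the special factor of length $k-1$, and it is classical that the Parikh vectors of these two return words form a unimodular matrix; equivalently, their lengths and their numbers of $1$'s are consecutive convergent data $(q_t, p_t)$ and $(q_{t+1}, p_{t+1})$, which satisfy $p_t q_{t+1} - p_{t+1} q_t = \pm 1$. The determinant of the two functionals in this cycle basis is exactly this $\pm 1$, so it is nonzero; equivalently, the two cycles carry distinct frequencies of the letter $1$, as they must since $\alpha$ is irrational. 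Consequently $f_u - f_v = 0$, that is $\abs{u}_w = \abs{v}_w$ for every factor $w$ of length $k$, and the alternative characterization of $k$-abelian equivalence recalled above then yields $u \sim_k v$.
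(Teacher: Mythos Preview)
Your argument is correct. Note, however, that the paper does not supply its own proof of this proposition: it is quoted as \cite[Proposition~2.8]{2013:on_a_generalization_of_abelian_equivalence_and_complexity_of_infinite}. The paper does describe that original proof later, inside the proof of \autoref{prp:ternary_example}: it is an induction on $\ell$, showing that $u \sim_\ell v$ together with the common prefix/suffix hypothesis forces $u \sim_{\ell+1} v$, and the only structural input is that there is at most one right special factor of each length.

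Your route is genuinely different. Rather than climbing from $\ell$ to $\ell+1$, you work once at level $k$ via the Rauzy graph $\Gamma_{k-1}$: the common prefix/suffix places $f_u - f_v$ in the two-dimensional cycle space, the equal-length and $\sim_1$ hypotheses impose two linear constraints, and you finish by checking that these constraints are independent on the cycle space using the unimodularity of the Parikh vectors of the two return words to the special factor (equivalently, the irrationality of $\alpha$). The inductive proof is more elementary and immediately portable to any word with at most one right special factor per length (this portability is exactly what the paper exploits in \autoref{prp:ternary_example}); your Rauzy-graph argument is more structural, pinpoints precisely where Sturmianity enters (the nonvanishing determinant $p_t q_{t+1} - p_{t+1} q_t$), and avoids the intermediate $\ell$'s altogether. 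Both are valid; the trade-off is that your final independence step invokes a Sturmian-specific fact about return words, whereas the original counting argument needs only the right-special condition.
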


  This result is interesting as it shows that rather weak conditions are enough for $k$-abelian equivalence in Sturmian
  words. This is not unique to Sturmian words: it holds for episturmian words
  \cite[Proposition~2.8]{2013:on_a_generalization_of_abelian_equivalence_and_complexity_of_infinite}, and in
  \cite[Theorem~1]{2018:on_the_k-abelian_complexity_of_the_cantor_sequence}, it is shown that
  \autoref{prp:pref_suff_ab} holds also for factors of the Cantor word, the fixed point of the substitution
  $0 \mapsto 000$, $1 \mapsto 101$. We will return to this matter in \autoref{sec:examples}.

  Let us then recall the following result which gives an arithmetical characterization of abelian equivalence in
  Sturmian words.

  \begin{proposition}\label{prp:abelian_characterization}
    \cite[Proposition~3.3]{2016:abelian_powers_and_repetitions_in_sturmian_words},
    \cite[Theorem~19]{2013:some_properties_of_abelian_return_words} Let $u$ and $v$ be two factors of the same length
    occurring in a Sturmian word of slope $\alpha$. Then $u \sim_1 v$ if and only if
    $[u], [v] \subseteq I(0, \{-\abs{u}\alpha\})$ or $[u], [v] \subseteq I(\{-\abs{u}\alpha\}, 1)$.
  \end{proposition}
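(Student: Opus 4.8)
The plan is to reduce the statement to counting a single letter. Since $u$ and $v$ are binary words of the same length $m = \abs{u}$, we have $u \sim_1 v$ if and only if $\abs{u}_1 = \abs{v}_1$. Reading the length-$m$ factor of $\infw{s}_{x,\alpha}$ starting from intercept $x$, its number of $1$'s is the number of indices $0 \le i < m$ with $R^i(x) \in I_1$, that is,
\[
  N_m(x) = \#\{\, i : 0 \le i < m \text{ and } \{x + i\alpha\} \in I_1 \,\}.
\]
This quantity is constant on each level $m$ interval (the whole factor is), so it descends to a function of the interval $[w] \in L(m)$ containing $x$, and the proposition is precisely a description of its fibres.

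Next I would determine $N_m(x)$ explicitly as $x$ ranges over $\T$. The cleanest route is to telescope: since the $i$-th letter equals $1$ exactly when $\{x + i\alpha\} \in I_1 = I(\{1-\alpha\},1)$, a routine mechanical-word identity gives $N_m(x) = \Floor{m\alpha + x} - \Floor{x}$, which for $x \in [0,1)$ is simply $\Floor{m\alpha + x}$. Writing $m\alpha = \Floor{m\alpha} + \{m\alpha\}$, one sees that $N_m(x) = \Floor{m\alpha}$ while $x < 1 - \{m\alpha\}$ and $N_m(x) = \Floor{m\alpha} + 1$ once $x > 1 - \{m\alpha\}$. The crucial observation is that this threshold is exactly $1 - \{m\alpha\} = \{-m\alpha\}$, as $\alpha$ is irrational and $m \ge 1$. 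Hence $N_m$ takes only the two consecutive values $\Floor{m\alpha}$ and $\Floor{m\alpha}+1$, the smaller one on $I(0,\{-m\alpha\})$ and the larger one on $I(\{-m\alpha\},1)$.

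An equivalent and more geometric argument, fitting the arithmetical viewpoint of the paper, avoids the closed form. The indicator of $\{x+i\alpha\} \in I_1$ drops by one as $x$ increases through $\{-i\alpha\}$ (the orbit point leaves $I_1$ at $0 \equiv 1$) and rises by one through $\{-(i+1)\alpha\}$ (it enters $I_1$ at $\{1-\alpha\}$). Summing over $0 \le i < m$, every interior point $\{-i\alpha\}$ with $1 \le i \le m-1$ receives exactly one up-jump and one down-jump, which cancel because the $\{-i\alpha\}$ are distinct; only the endpoints survive, a single down-jump at $0$ and a single up-jump at $\{-m\alpha\}$. This recovers that $N_m$ is two-valued with the two values separated precisely by the arcs $I(0,\{-m\alpha\})$ and $I(\{-m\alpha\},1)$.

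Finally I would conclude. The points $0$ and $\{-m\alpha\}$ are among the endpoints $0,\{-\alpha\},\dots,\{-m\alpha\}$ defining $L(m)$, so every level $m$ interval lies entirely inside one of the two arcs; in particular $N_m$ is constant on $[u]$ and on $[v]$. Therefore $\abs{u}_1 = \abs{v}_1$ holds if and only if $[u]$ and $[v]$ lie in the same arc, i.e. either both inside $I(0,\{-m\alpha\})$ or both inside $I(\{-m\alpha\},1)$, which is the claim. I expect the only delicate point to be the bookkeeping of jumps at the partition points together with the endpoint convention for $I(\cdot,\cdot)$; this is harmless, since only interior points of the intervals are ever coded and the irrationality of $\alpha$ keeps all the points $\{-i\alpha\}$ distinct.
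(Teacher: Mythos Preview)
The paper does not prove this proposition; it is quoted from the literature (\cite[Proposition~3.3]{2016:abelian_powers_and_repetitions_in_sturmian_words} and \cite[Theorem~19]{2013:some_properties_of_abelian_return_words}) and used as a black box. There is therefore no in-paper argument to compare against.

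Your argument is correct. The telescoping identity $N_m(x)=\lfloor x+m\alpha\rfloor-\lfloor x\rfloor$ is exactly the standard ``mechanical word'' computation (this is how $\abs{u}_1$ is computed in the cited sources and, e.g., in \cite[Chapter~2]{2002:algebraic_combinatorics_on_words}), and your identification of the jump at $1-\{m\alpha\}=\{-m\alpha\}$ is precisely the point. The alternative jump-cancellation description you give is just a restatement of the same telescope at the level of indicator functions. Two small remarks: first, the closed form as written assumes the convention $I_1=[\{1-\alpha\},1)$ and $0<\alpha<1$; under the other endpoint convention one uses ceilings, but as you note the two codings agree off the orbit of $0$, so the conclusion about the level~$m$ intervals $[u]$ and $[v]$ is unaffected. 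Second, it is worth stating explicitly that $0$ and $\{-m\alpha\}$ are endpoints of level~$m$ intervals, so that each $[w]\in L(m)$ is contained in exactly one of the two arcs; you do say this, and it is what makes the ``if and only if'' clean.
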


  In other words, the two possible abelian equivalence classes for factors of length $m$ correspond to two intervals on
  the circle marked by the points $0$ and $\{-m\alpha\}$. Next we generalize \autoref{prp:abelian_characterization} for
  $k$-abelian equivalence.

  By \autoref{prp:pref_suff_ab}, we need to at least consider the prefixes and suffixes of length up to $k - 1$. Let
  $m \geq 1$, and define $\mathcal{D}_{k,m} = \{0, \{-\alpha\}, \{-2\alpha\}, \ldots, \{-\min\{m, k-1\}\alpha\}\}$.
  These points divide the circle into $\min\{m + 1, k\}$ intervals (which are the level $\min\{m, k-1\}$ intervals),
  and if points $x$ and $y$ belong to the same interval, then the prefixes of $\infw{s}_{x,\alpha}$ and
  $\infw{s}_{y,\alpha}$ of length $\min\{m, k-1\}$ are equal. Now if $m \geq k - 1$, then
  \begin{equation*}
    R^{-(m - (k - 1))}(\mathcal{D}_{k,m}) = \{\{-(m - (k - 1))\alpha\}, \ldots, \{-m\alpha\}\},
  \end{equation*}
  and these points also divide the circle into $k$ intervals. If $x$ and $y$ belong to the same interval, then the
  prefixes of $\infw{s}_{x,\alpha}$ and $\infw{s}_{y,\alpha}$ of length $m$ have a common suffix of length $k - 1$. Set
  $\mathcal{P}_{k,m} = \smash[t]{\mathcal{D}_{k,m} \cup R^{-(m - (k - 1))}(\mathcal{D}_{k,m})}$ if $m \geq k - 1$;
  otherwise set $\mathcal{P}_{k,m} = \mathcal{D}_{k,m}$.

  \begin{definition}
    $\mathcal{I}_{k,m}$ is the set of subintervals of $\T$ determined by the points of $\mathcal{P}_{k,m}$.
  \end{definition}
  
  What me mean by this precisely is that, to define the intervals $I_i$ of $\mathcal{I}_{k,m}$, we order the points
  $x_i$ of $\mathcal{P}_{k,m}$: $0 = x_0 < x_1 < \ldots < x_{\ell - 1} < x_\ell = 1$, $\ell = \abs{\mathcal{P}_{k,m}}$,
  and set $I_i = [x_i, x_{i+1})$ if $0 \in I_0$ and $I_i = (x_i, x_{i+1}]$ if $0 \notin I_0$ for $0 \leq i < \ell$.
  Observe that for $m < k-1$, the intervals $\mathcal I_{k,m}$ coincide with the level $m$ intervals.
  
  As before for the level $m$ intervals $L(m)$, by writing $\max \mathcal{I}_{k,m}$ we mean the maximum length of an
  interval in $\mathcal{I}_{k,m}$. We claim that the intervals $\mathcal{I}_{k,m}$ determine the $k$-abelian
  equivalence classes.

  \begin{theorem}\label{thm:eq_class_intervals}
    Let $u$ and $v$ be two factors of length $m$ occurring in a Sturmian word of slope $\alpha$. Then $u \sim_k v$ if
    and only if there exists $J \in \mathcal{I}_{k,m}$ such that $[u], [v] \subseteq J$.
  \end{theorem}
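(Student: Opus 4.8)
The plan is to translate each of the three conditions characterizing $k$-abelian equivalence in \autoref{prp:pref_suff_ab} into a statement about interval membership on $\T$, and then to observe that their conjunction is exactly membership in a common interval of $\mathcal{I}_{k,m}$. Throughout I write $p = \min\{m, k-1\}$ and recall that a level $m$ interval $[w]$ collects precisely those intercepts $x$ for which $\infw{s}_{x,\alpha}$ begins with $w$. Since the defining points of $\mathcal{P}_{k,m}$ form a subset of $\{0, \{-\alpha\}, \ldots, \{-m\alpha\}\}$, each of $[u]$ and $[v]$ lies inside a single interval of $\mathcal{I}_{k,m}$; hence ``$[u], [v] \subseteq J$ for some $J \in \mathcal{I}_{k,m}$'' means exactly that $[u]$ and $[v]$ lie in the same interval cut out by $\mathcal{P}_{k,m}$. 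I first dispose of the degenerate range $m < k-1$, where $p = m$: by the observation that $\mathcal{I}_{k,m}$ then coincides with the level $m$ intervals, $[u], [v] \subseteq J$ forces $[u] = [v] = J$ and hence $u = v$, while a common prefix of length $p = m$ in \autoref{prp:pref_suff_ab} likewise forces $u = v$, so the two sides agree.

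For the main case $m \geq k-1$ I treat the prefix and suffix conditions in turn. The prefix condition is immediate: the intervals cut out by $\mathcal{D}_{k,m}$ are precisely the level $p = k-1$ intervals, and $[u]$ lies in the level $(k-1)$ interval corresponding to the length-$(k-1)$ prefix of $u$; hence $u$ and $v$ share a common prefix of length $k-1$ if and only if $[u]$ and $[v]$ lie in the same interval of $\mathcal{D}_{k,m}$. The suffix condition requires the rotation. Reading off the letters in positions $m-(k-1), \ldots, m-1$ of $\infw{s}_{x,\alpha}$, one checks that the length-$(k-1)$ suffix of the length-$m$ prefix of $\infw{s}_{x,\alpha}$ equals the length-$(k-1)$ prefix of $\infw{s}_{R^{m-(k-1)}(x),\alpha}$. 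Thus $u$ and $v$ share a common suffix of length $k-1$ exactly when $R^{m-(k-1)}(x)$ and $R^{m-(k-1)}(y)$ lie in a common interval of $\mathcal{D}_{k,m}$, and, applying the rotation $R^{-(m-(k-1))}$, exactly when $[u]$ and $[v]$ lie in a common interval cut out by $R^{-(m-(k-1))}(\mathcal{D}_{k,m})$.

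Since the intervals cut out by a union of point sets are the nonempty intersections of the intervals cut out by each set, combining the two translations shows that $[u]$ and $[v]$ lie in the same interval of $\mathcal{I}_{k,m}$ if and only if $u$ and $v$ share both a common prefix and a common suffix of length $k-1$. It remains to account for the abelian condition $u \sim_1 v$ of \autoref{prp:pref_suff_ab}, and this is where the two endpoints of the abelian partition are already present: both $0 \in \mathcal{D}_{k,m}$ and $\{-m\alpha\} \in R^{-(m-(k-1))}(\mathcal{D}_{k,m})$ belong to $\mathcal{P}_{k,m}$, so every interval of $\mathcal{I}_{k,m}$ is contained in one of the two arcs $I(0, \{-m\alpha\})$ and $I(\{-m\alpha\}, 1)$. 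Consequently, membership of $[u], [v]$ in a common interval of $\mathcal{I}_{k,m}$ already forces $u \sim_1 v$ by \autoref{prp:abelian_characterization}, so the abelian condition is automatic and imposes no further constraint. Putting the three translations together with \autoref{prp:pref_suff_ab} yields the claimed equivalence.

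I expect the main obstacle to be the suffix computation: one must verify carefully, keeping the endpoint convention for $I_0$ fixed, that passing to the prefix of the rotated word $\infw{s}_{R^{m-(k-1)}(x),\alpha}$ faithfully records the suffix, and that $R^{m-(k-1)}$ carries the partition induced by $R^{-(m-(k-1))}(\mathcal{D}_{k,m})$ onto that induced by $\mathcal{D}_{k,m}$. The remaining, easily overlooked, subtlety is the observation that no separate argument for abelian equivalence is needed, precisely because $0$ and $\{-m\alpha\}$ are automatically among the cut points of $\mathcal{P}_{k,m}$.
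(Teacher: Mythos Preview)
Your argument contains a genuine gap in the reverse implication. The claim ``the intervals cut out by a union of point sets are the nonempty intersections of the intervals cut out by each set'' is true on a line but \emph{false on a circle}: the intersection of two arcs can be disconnected. Concretely, in the paper's own \autoref{ex:fibonacci} (slope $1/\varphi^2$, $k=2$, $m=5$) the factors $00100$ and $01010$ share a common prefix $0$ and a common suffix $0$, so $[00100]$ and $[01010]$ lie in the same arc cut by $\mathcal{D}_{2,5}$ \emph{and} in the same arc cut by $R^{-4}(\mathcal{D}_{2,5})$, yet they lie in \emph{distinct} intervals of $\mathcal{I}_{2,5}$. Your biconditional ``same interval of $\mathcal{I}_{k,m}$ iff common prefix and common suffix of length $k-1$'' is therefore false, and the proof collapses at that step.

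The subtlety you flagged at the end is actually inverted. In the direction ``same interval $\Rightarrow u \sim_k v$'' the abelian condition is indeed automatic, because $0$ and $\{-m\alpha\}$ lie in $\mathcal{P}_{k,m}$; your argument and the paper's agree here. But in the direction ``$u \sim_k v \Rightarrow$ same interval'' the abelian hypothesis is \emph{essential}: it is precisely what rules out the situation above, where the arc of $\mathcal{S}=R^{-(m-(k-1))}(\mathcal{D}_{k,m})$ containing both $[u]$ and $[v]$ wraps around and is split by points of $\mathcal{D}_{k,m}$. The paper handles this by a short contradiction argument: if $[u]$ and $[v]$ share prefix and suffix yet lie in different intervals of $\mathcal{I}_{k,m}$, then \emph{all} points of $\mathcal{S}$ (in particular $\{-m\alpha\}$) must lie between them, forcing $u \not\sim_1 v$. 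You need an argument of this kind; the partition-intersection shortcut does not work on $\T$.
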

  \begin{proof}
    Assume that $m < k - 1$. Then $u \sim_k v$ if and only if $u = v$. This means that $[u]$ and $[v]$ equal one of the
    level $m$ intervals. When $m < k - 1$, the intervals $\mathcal{I}_{k,m}$ are precisely the level $m$ intervals, so
    we are done. We may thus assume that $m \geq k - 1$.

    Suppose first that there exists $J \in \mathcal{I}_{k,m}$ such that $[u], [v] \subseteq J$. By the definition of
    the intervals $\mathcal{I}_{k,m}$, the words $u$ and $v$ share a common prefix and a common suffix of length
    $k - 1$. Moreover they are abelian equivalent by \autoref{prp:abelian_characterization} because the point
    $\{-m\alpha\}$ separating the two abelian equivalence classes is among the points $\mathcal{P}_{k,m}$. Therefore
    \autoref{prp:pref_suff_ab} implies that $u \sim_k v$.

    Suppose that $u \sim_k v$. Then $u$ and $v$ share a common prefix and a common suffix of length $k - 1$. Assume for
    a contradiction that $[u]$ and $[v]$ are contained in distinct intervals of $\mathcal{I}_{k,m}$. Without loss of
    generality, we assume that $\sup[u] \leq \inf[v]$. Let $K$ be the interval containing exactly the points $z$ for
    which $\sup[u] \leq z \leq \inf[v]$. (If $\sup[u] = \inf[v]$, then we let $K$ to be the set containing the common
    endpoint of $[u]$ and $[v]$.) Since $[u]$ and $[v]$ are contained in distinct intervals of $\mathcal{I}_{k,m}$,
    there exists a point $x$ in $\mathcal{P}_{k,m}$ such that $x \in K$. Denote the set
    $\smash[t]{R^{-(m - (k - 1))}(\mathcal{D}_{k,m})}$ by $\mathcal{S}$. The point $x$ cannot be in $\mathcal{D}_{k,m}$
    because $u$ and $v$ share a common prefix of length $k - 1$. Therefore we must have $x \in \mathcal{S}$. Let $y$ be
    an arbitrary point in $\mathcal{S}$. If $y \in \mathbb{T} \setminus ([u] \cup [v] \cup K)$, then either
    $[u] \subseteq I(x,y)$ and $[v] \cap I(x, y) = \emptyset$ or symmetrically $[v] \subseteq I(x,y)$ and
    $[u] \cap I(x, y) = \emptyset$. Then, by the definition of the points $\mathcal{S}$, we see that $u$ and $v$ have
    distinct suffixes of length $k - 1$, which is impossible. We conclude that $\mathcal{S} \subseteq K$ (see
    \autoref{ex:fibonacci} for this situation). Since $\{-m\alpha\} \in \mathcal{S}$, it follows by
    \autoref{prp:abelian_characterization} that $u$ and $v$ are not abelian equivalent. This is a contradiction.
  \end{proof}

  Notice that $\mathcal{I}_{k,m}$ contains $2k$ intervals when $m \geq 2k - 1$ and $m + 1$ intervals when
  $0 \leq m \leq 2k - 2$. This number of abelian equivalence classes for factors of length $m$ characterizes Sturmian
  words; see \cite[Theorem~4.1]{2013:on_a_generalization_of_abelian_equivalence_and_complexity_of_infinite}.

  \begin{example}\label{ex:fibonacci}
    Let us consider the $2$-abelian equivalence classes of length $5$ of the Fibonacci word; its slope $\alpha$ is
    $1/\varphi^2$. On the left in \autoref{fig:Fibonacci_k-intervals_and_factors}, there are two concentric circles.
    The outer circle represents the level $5$ intervals separated by the points $0$, $\{-\alpha\} (\approx 0.62)$,
    $\{-2\alpha\}$ ($\approx 0.24$), $\{-3\alpha\}$ ($\approx 0.85$), $\{-4\alpha\}$ ($\approx 0.47$), and
    $\{-5\alpha\}$ ($\approx 0.09$). The inner circle shows the endpoints of the $2$-abelian equivalence classes. The
    points $0$ and $\{-\alpha\}$ of $\mathcal{D}_{2,5}$ are shown in black while the points $\{-4\alpha\}$ and
    $\{-5\alpha\}$ of $R^{-4}(\mathcal{D}_{2,5})$ are represented by circles filled with white. The concentric circles
    on the right in \autoref{fig:Fibonacci_k-intervals_and_factors} give the corresponding intervals and points when
    $m = 7$.

    We have $4$ $2$-abelian equivalence classes for length $5$: $\{00100\}$, $\{00101, 01001\}$, $\{01010\}$, and
    $\{10010, 10100\}$. The singleton classes are special. At the end of the proof of \autoref{thm:eq_class_intervals},
    we had to take some extra steps because factors corresponding to two distinct intervals of $\mathcal{I}_{k,m}$
    could share prefixes and suffixes of length $k - 1$. Indeed here $00100$ and $01010$ have common prefixes and
    suffixes of length $1$, but this does not guarantee abelian equivalence.
	\end{example}

  \begin{figure}
    \begin{minipage}{.5\textwidth}
      \centering
      \begin{tikzpicture}
        \newcommand\CircleRadiusI{1.8}
        \newcommand\CircleRadiusII{1.0}
        \newcommand\Slope{0.381966}

        \draw (0,0) circle (\CircleRadiusI);
        \foreach \n in {0,...,5}{
          \pgfmathparse{(-1)*\n*\Slope}
          \filldraw[fill=black] ({(\pgfmathresult - floor(\pgfmathresult)) * 360}:\CircleRadiusI) circle (1pt);
        }

        \draw (0,0) circle (\CircleRadiusII);
        \foreach \n/\c in {0/black, 1/black, 4/white, 5/white}{
          \pgfmathparse{(-1)*\n*\Slope}
          \filldraw[fill=\c] ({(\pgfmathresult - floor(\pgfmathresult)) * 360}:\CircleRadiusII) circle (1pt);
        }

        \node at (2.3,0.5)    {\scriptsize{$[00100]$}};
        \node at (1.5,1.7)    {\scriptsize{$[00101]$}};
        \node at (-1.7,1.5)   {\scriptsize{$[01001]$}};
        \node at (-2.25,-0.5) {\scriptsize{$[01010]$}};
        \node at (-0.2,-2.05) {\scriptsize{$[10010]$}};
        \node at (2.1,-1.0)   {\scriptsize{$[10100]$}};

        \node at (1.6,0) {\scriptsize{$0$}};
        \node at (-1.05,-1.1) {\scriptsize{$-\alpha$}};
        \node at (0.15,1.5) {\scriptsize{$-2\alpha$}};
        \node at (0.8,-1.2) {\scriptsize{$-3\alpha$}};
        \node at (-1.4,0.3) {\scriptsize{$-4\alpha$}};
        \node at (1.1,0.9) {\scriptsize{$-5\alpha$}};
      \end{tikzpicture}
    \end{minipage}%
    \begin{minipage}{.5\textwidth}
      \centering
      \begin{tikzpicture}
        \newcommand\CircleRadiusI{1.8}
        \newcommand\CircleRadiusII{1.0}
        \newcommand\Slope{0.381966}

        \draw (0,0) circle (\CircleRadiusI);
        \foreach \n in {0,...,7}{
          \pgfmathparse{(-1)*\n*\Slope}
          \filldraw[fill=black] ({(\pgfmathresult - floor(\pgfmathresult)) * 360}:\CircleRadiusI) circle (1pt);
        }

        \draw (0,0) circle (\CircleRadiusII);
        \foreach \n/\c in {0/black, 1/black, 6/white, 7/white}{
          \pgfmathparse{(-1)*\n*\Slope}
          \filldraw[fill=\c] ({(\pgfmathresult - floor(\pgfmathresult)) * 360}:\CircleRadiusII) circle (1pt);
        }

        \node at (2.4,0.5)   {\scriptsize{$[0010010]$}};
        \node at (1.5,1.8)   {\scriptsize{$[0010100]$}};
        \node at (-0.8,2.05) {\scriptsize{$[0100100]$}};
        \node at (-2.1,1.2)  {\scriptsize{$[0100101]$}};
        \node at (-2.3,-0.7) {\scriptsize{$[0101001]$}};
        \node at (-1.4,-1.8) {\scriptsize{$[1001001]$}};
        \node at (0.9,-2.0)  {\scriptsize{$[1001010]$}};
        \node at (2.2,-1.0)  {\scriptsize{$[1010010]$}};

        \node at (1.6,0) {\scriptsize{$0$}};
        \node at (-1.05,-1.1) {\scriptsize{$-\alpha$}};
        \node at (0.15,1.5) {\scriptsize{$-2\alpha$}};
        \node at (0.8,-1.2) {\scriptsize{$-3\alpha$}};
        \node at (-1.4,0.3) {\scriptsize{$-4\alpha$}};
        \node at (1.1,0.9) {\scriptsize{$-5\alpha$}};
        \node at (-0.4,-1.5) {\scriptsize{$-6\alpha$}};
        \node at (-0.7,1.35) {\scriptsize{$-7\alpha$}};
      \end{tikzpicture}
    \end{minipage}
    \caption{Factors of length $5$ and $7$ of the Fibonacci word on the unit circle. The outer circles illustrate the
             level $5$ and $7$ intervals and the inner circles the $2$-abelian equivalence classes of length $5$ and
             $7$.}
    \label{fig:Fibonacci_k-intervals_and_factors}
  \end{figure}
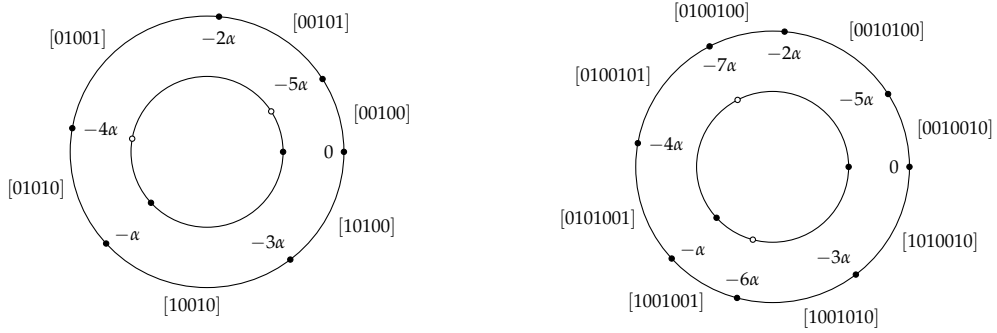

  We make an observation regarding the part of the proof of \autoref{thm:eq_class_intervals} showing that if two level
  $m$ intervals $[u]$ and $[v]$ are included in distinct intervals of $\mathcal{I}_{k,m}$ then $u \not\sim_k v$. The
  proof shows that if $u$ and $v$ have common prefixes and suffixes of length $k-1$, the only way that $u \not\sim_k v$
  is when all the points $\mathcal{S}$ (this is the set $R^{-(m - (k - 1))}(\mathcal{D}_{m,k})$) are contained in one
  level $(k - 1)$ interval $J$. We claim that this phenomenon cannot occur if $k \geq 2$ and
  $\norm{\alpha} > 1/(2(k-1))$. Notice that in the case $k = 2$ this may happen since $\norm{\alpha} < 1/2$ always.
  
  Assume that $k \geq 2$. There exist at least two points at distance $\norm{\alpha}$ in $\mathcal{S}$ (e.g.,
  $\{-(m-1)\alpha\}$ and $\{-m\alpha\}$) which implies that the length of $J$ is greater than $\norm{\alpha}$. If
  $k - 1 \geq \floor{1/\norm{\alpha}}$, then each interval determined by the points $0$, $\{-\alpha\}$, $\ldots$,
  $\{-(k-1)\alpha\}$ has length at most $\norm{\alpha}$, so we conclude that $k - 1 < \floor{1/\norm{\alpha}}$. The
  intervals determined by the points $0$, $\{-\alpha\}$, $\ldots$, $\{-(k-1)\alpha\}$ are now the same as those
  determined by the points $0$, $1 - \norm{\alpha}$, $1 - 2\norm{\alpha}$, $\ldots$, $1 - (k-1)\norm{\alpha}$, so all
  of them have length $\norm{\alpha}$ except one that has length $1 - (k-1)\norm{\alpha}$. Thus $J$ has length
  $1 - (k-1)\norm{\alpha}$. Since $R$ is an isometry, $J$ contains $(k - 1)$ intervals of length $\norm{\alpha}$
  (defined by the points of $\mathcal{S}$), and we must have $(k - 1)\norm{\alpha} < 1 - (k - 1)\norm{\alpha}$, that
  is, $\norm{\alpha} < 1/(2(k-1))$. Thus we obtain the following strengthening of \autoref{prp:pref_suff_ab}.

  \begin{theorem}\label{thm:pref_suff_ab_improved}
    Let $u$ and $v$ be two factors of the same length occurring in a Sturmian word of slope $\alpha$. Then $u \sim_k v$
    if and only if they share a common prefix and a common suffix of length $\min\{\abs{u}, k-1\}$ and $u \sim_1 v$.
    Moreover, the condition $u \sim_1 v$ may be omitted if $2(k - 1)\norm{\alpha} > 1$.
  \end{theorem}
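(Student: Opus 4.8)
The first assertion is exactly \autoref{prp:pref_suff_ab}, so the only genuinely new content is the ``moreover'' clause. My plan is to show that, under the hypothesis $2(k-1)\norm{\alpha} > 1$, two factors $u$ and $v$ of the same length $m$ sharing a common prefix and a common suffix of length $\min\{m, k-1\}$ are automatically abelian equivalent; the full statement then follows by feeding this into \autoref{prp:pref_suff_ab}. If $m \leq k-1$ then $\min\{m, k-1\} = m$, so a shared prefix of length $m = \abs{u}$ already gives $u = v$ and hence $u \sim_1 v$; thus I may assume $m \geq k$, where $\mathcal{S} = R^{-(m-(k-1))}(\mathcal{D}_{k,m})$ is a genuine rotated copy of $\mathcal{D}_{k,m}$.

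I would argue the key implication by contradiction. Suppose $u$ and $v$ share a common prefix and a common suffix of length $k-1$ but $u \not\sim_1 v$. Since $\sim_k$ refines $\sim_1$, this gives $u \not\sim_k v$, so by \autoref{thm:eq_class_intervals} the intervals $[u]$ and $[v]$ lie in distinct members of $\mathcal{I}_{k,m}$. I would then invoke the mechanism uncovered in the proof of that theorem: because $u$ and $v$ share a common prefix of length $k-1$, no point of $\mathcal{D}_{k,m}$ separates $[u]$ from $[v]$, so the gap $K$ between them lies inside a single level $(k-1)$ interval $J$; and because they also share a common suffix of length $k-1$, every point of $\mathcal{S}$ must lie in $K$, hence in $J$. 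The bad configuration therefore reduces to having all $k$ points of $\mathcal{S}$ inside one level $(k-1)$ interval.

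The decisive step is then a metric estimate on the circle. Since $R$ is an isometry, the $k$ points of $\mathcal{S}$ have the same gap structure as $\mathcal{D}_{k,m} = \{0, \{-\alpha\}, \ldots, \{-(k-1)\alpha\}\}$. Two consecutive points of $\mathcal{S}$, say $\{-(m-1)\alpha\}$ and $\{-m\alpha\}$, lie at distance $\norm{\alpha}$, so $J$ has length strictly greater than $\norm{\alpha}$. If $k-1 \geq \floor{1/\norm{\alpha}}$, then every level $(k-1)$ interval has length at most $\norm{\alpha}$, which is impossible; hence $k-1 < \floor{1/\norm{\alpha}}$. In this range the points of $\mathcal{D}_{k,m}$ coincide with $0, 1-\norm{\alpha}, \ldots, 1-(k-1)\norm{\alpha}$ and cut the circle into $k-1$ gaps of length $\norm{\alpha}$ and one long gap of length $1-(k-1)\norm{\alpha}$, and since $\abs{J} > \norm{\alpha}$ the interval $J$ must be this long one. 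The minimal arc containing all of $\mathcal{S}$ is the complement of its longest gap, so it has length $(k-1)\norm{\alpha}$; fitting it inside $J$ forces $(k-1)\norm{\alpha} < 1-(k-1)\norm{\alpha}$, i.e.\ $2(k-1)\norm{\alpha} < 1$, contradicting the hypothesis.

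The step I expect to require the most care is this last metric estimate: pinning down that $\mathcal{S}$ has exactly one long gap of length $1-(k-1)\norm{\alpha}$ (so that its span is precisely $(k-1)\norm{\alpha}$), and that $J$ is necessarily the long interval rather than a short one. This rests on the three-distance behaviour of the orbit and on bookkeeping the open/closed endpoint conventions of the level intervals. A reassuring point is that, since $\alpha$ is irrational, the borderline case $2(k-1)\norm{\alpha} = 1$ cannot occur, so the strict inequalities in the argument never degenerate.
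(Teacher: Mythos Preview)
Your proposal is correct and follows essentially the same route as the paper: reduce the ``moreover'' clause to showing that the configuration $\mathcal{S} \subseteq J$ for a single level $(k-1)$ interval $J$ (extracted from the proof of \autoref{thm:eq_class_intervals}) forces $2(k-1)\norm{\alpha} < 1$, and then carry out exactly the metric estimate the paper gives---two points of $\mathcal{S}$ at distance $\norm{\alpha}$ force $\abs{J} > \norm{\alpha}$, hence $k-1 < \floor{1/\norm{\alpha}}$, hence $J$ is the unique long gap of length $1-(k-1)\norm{\alpha}$, and the span $(k-1)\norm{\alpha}$ of $\mathcal{S}$ must fit inside it. Your framing via $u \not\sim_1 v \Rightarrow u \not\sim_k v$ and \autoref{thm:eq_class_intervals} is a slightly more explicit way of entering the same argument, but the substance is identical.
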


  The slope of the Fibonacci word is approximately $0.38$, so \autoref{thm:pref_suff_ab_improved} says that the
  condition $u \sim_1 v$ can then be omitted when $k \geq 3$. It is rather surprising that such a weak condition is
  sufficient to establish $k$-abelian equivalence. This raises the question if it is possible to improve on the
  Fibonacci word and have an infinite word for which the condition is redundant even when $k = 2$. We study this
  question in \autoref{sec:examples}.
  
  \subsection{The \texorpdfstring{$k$}{k}-Lagrange Spectrum}\label{ssec:k-lagrange_spectrum}
  Let $\abexp{k,\alpha}{m}$ be the maximum exponent of $k$-abelian powers of period $m$ occurring in a Sturmian word of
  slope $\alpha$. We define the \emph{$k$-abelian critical exponent of slope $\alpha$} to be the quantity
  \begin{equation*}
    \limsup_{m\to\infty} \frac{\abexp{k,\alpha}{m}}{m},
  \end{equation*}
  and we denote it by $\act{k}{\alpha}$. It measures the maximal ratio between the exponent and period of a $k$-abelian
  power in a Sturmian word of slope $\alpha$; it was introduced in the case $k = 1$ in
  \cite{2016:abelian_powers_and_repetitions_in_sturmian_words} (in the current paper we follow the notation of the
  dissertation \cite{diss:jarkko_peltomaki} instead of the article
  \cite{2016:abelian_powers_and_repetitions_in_sturmian_words}). As mentioned in the introduction, the set of finite
  values of $\act{1}{\alpha}$ is the Lagrange spectrum
  \cite[Theorem~5.10]{2016:abelian_powers_and_repetitions_in_sturmian_words}, so the finite values of $\act{k}{\alpha}$
  can be viewed as a combinatorial generalization of the Lagrange spectrum. Thus we give the following definition.

  \begin{definition}
    The \emph{$k$-Lagrange spectrum} $\lags{k}$ is the set
    $\{\act{k}{\alpha} : \text{$\alpha$ is irrational}\} \cap \R$.
  \end{definition}

  In order to study $\lags{k}$, we begin by showing how to compute $\abexp{k,\alpha}{m}$ especially when $m$ is a
  denominator of a convergent of $\alpha$.

  Say that a Sturmian word $\infw{s}_{x,\alpha}$ of slope $\alpha$ and intercept $x$ begins with a $k$-abelian power of
  period $m$ and exponent $n$. The prefix of $\infw{s}_{x,\alpha}$ of length $m$ and the factor of
  $\infw{s}_{x,\alpha}$ of length $m$ starting after this prefix are $k$-abelian equivalent so, by
  \autoref{thm:eq_class_intervals}, the points $x$ and $\{x + m\alpha\}$ lie in a common interval of
  $\mathcal{I}_{k,m}$. The distance of these points is $\norm{m\alpha}$. Thus we see that the points $x$,
  $\{x + m\alpha\}$, $\ldots$, $\{x + (n - 1)m\alpha\}$ all lie in a common interval of $\mathcal{I}_{k,m}$, which must
  have length at least $(n - 1)\norm{m\alpha}$. Conversely, given such points, we see that the word
  $\infw{s}_{x,\alpha}$ begins with a $k$-abelian power of period $m$ and exponent $n$. Thus by considering the longest
  interval in $\mathcal{I}_{k,m}$, we obtain the following result (recall that $\max \mathcal{I}_{k,m}$ means the
  maximal length of an interval in $\mathcal{I}_{k,m}$).

  \begin{lemma}\label{lem:max_exponent}
    We have $\abexp{k,\alpha}{m} = \Floor{ \frac{\max \mathcal{I}_{k,m}}{\norm{m\alpha}} } + \gamma$, where $\gamma$ is
    $1$ if $\max \mathcal{I}_{k,m} \neq \norm{m\alpha}$ and $0$ otherwise.
  \end{lemma}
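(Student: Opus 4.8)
The plan is to make rigorous the counting argument sketched in the paragraph preceding the lemma, which already contains the essential geometric idea: a $k$-abelian power of period $m$ and exponent $n$ occurring in $\infw{s}_{x,\alpha}$ corresponds exactly to the $n$ consecutive rotation points $x, \{x+m\alpha\}, \ldots, \{x+(n-1)m\alpha\}$ all lying in a single interval of $\mathcal{I}_{k,m}$. So I first translate the quantity $\abexp{k,\alpha}{m}$ into a purely geometric maximum: by \autoref{thm:eq_class_intervals}, $\infw{s}_{x,\alpha}$ begins with such a power if and only if these $n$ points lie in a common $J \in \mathcal{I}_{k,m}$, and since the rotation $R^m$ moves points by the signed amount whose absolute value is $\norm{m\alpha}$, the $n$ points are equally spaced with gap $\norm{m\alpha}$ and span a sub-arc of length exactly $(n-1)\norm{m\alpha}$. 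Because every factor of period $m$ occurs as a prefix of some $\infw{s}_{x,\alpha}$ (the intercept $x$ ranges freely), maximizing the exponent over all positions in all slope-$\alpha$ Sturmian words amounts to maximizing $n$ over all choices of $J$ and of the starting phase.

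Next I would carry out the one-dimensional fitting count. Fix the longest interval $J$, so $\abs{J} = \max\mathcal{I}_{k,m}$. The largest $n$ for which one can place $n$ points spaced $\norm{m\alpha}$ apart inside an interval of length $\abs{J}$ is governed by the inequality $(n-1)\norm{m\alpha} \le \abs{J}$, i.e. $n \le \abs{J}/\norm{m\alpha} + 1$, giving $n = \Floor{\abs{J}/\norm{m\alpha}} + 1$ in the generic case. The subtlety — and the reason for the correction term $\gamma$ — is the boundary case where $\abs{J}/\norm{m\alpha}$ is an integer, equivalently (after clearing) the degenerate situation recorded by the condition $\max\mathcal{I}_{k,m} = \norm{m\alpha}$ flagged in the statement. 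Here one must decide whether a point landing exactly on the endpoint of $J$ counts; because the intervals of $\mathcal{I}_{k,m}$ are half-open (the endpoint convention $[x_i,x_{i+1})$ or $(x_i,x_{i+1}]$ fixed in the definition), a point at the far endpoint falls into the \emph{next} interval and must be excluded, which is exactly what subtracting one (i.e. taking $\gamma = 0$) accomplishes. I would treat the two cases separately and check that in each the floor formula plus $\gamma$ returns the correct integer.

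The main obstacle, and the part deserving the most care, is precisely this endpoint bookkeeping: I must verify that the half-open convention combined with the irrationality of $\alpha$ rules out the ambiguous coincidences except in the single explicitly-identified case $\max\mathcal{I}_{k,m} = \norm{m\alpha}$, and that in that case the extremal power genuinely has exponent $\Floor{\max\mathcal{I}_{k,m}/\norm{m\alpha}}$ rather than one more. Concretely, since $\alpha$ is irrational the points $\{x + jm\alpha\}$ never collide for distinct $j$, so the only way an endpoint coincidence arises is between a rotation point and an element of $\mathcal{P}_{k,m}$, and I would argue that by perturbing the intercept $x$ slightly one always realizes the supremal count inside the open interior unless the interval is so short that it holds only the two endpoints — which is the $\max\mathcal{I}_{k,m} = \norm{m\alpha}$ situation. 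Once this case analysis is pinned down, combining it with the geometric reformulation from the first paragraph yields the claimed formula immediately.
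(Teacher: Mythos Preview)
Your approach is exactly the paper's own: the paragraph preceding the lemma \emph{is} the paper's proof, and you are simply proposing to make it precise, including the $\gamma$-bookkeeping that the paper leaves implicit.

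One step needs sharpening. You assert that the boundary case ``$\abs{J}/\norm{m\alpha}$ is an integer'' is equivalent to $\max\mathcal{I}_{k,m} = \norm{m\alpha}$, and you propose to verify this by perturbing the intercept $x$. But perturbing $x$ cannot settle this: whether $\abs{J}/\norm{m\alpha}$ is an integer is a question about $\abs{J}$ and $\norm{m\alpha}$ alone, not about where the orbit starts. If the ratio happened to equal, say, $2$, your fitting count would give $\floor{\abs{J}/\norm{m\alpha}} = 2$ rather than $\floor{\abs{J}/\norm{m\alpha}} + 1 = 3$, regardless of how $x$ is chosen, and the lemma's $\gamma$-rule would be wrong. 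The correct argument is algebraic: every endpoint of $\mathcal{I}_{k,m}$ is of the form $\{-j\alpha\}$ with $0 \le j \le m$, so every interval length can be written as $c\alpha + N$ with $c, N \in \mathbb{Z}$ and $\abs{c} \le m$; likewise $\norm{m\alpha} = \pm m\alpha + N'$. If $L = r\norm{m\alpha}$ for a positive integer $r$, irrationality of $\alpha$ forces $\abs{c} = rm$, hence $r = 1$. With this in place your half-open counting argument goes through and yields the stated formula.
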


  \begin{example}
    (\autoref{ex:fibonacci} continued) The interval of the class $\{10010, 10100\}$ has length $\alpha$ which means by
    \autoref{lem:max_exponent} that using the words in the class a $2$-abelian power of period $5$ and exponent
    $\floor{\alpha/\norm{5\alpha}} + 1 = 5$ can be formed. Indeed, it is straightforward to check that
    $(10100)^2 (10010)^3$ is a factor of the Fibonacci word. Using words from the class $\{01010\}$ only $2$-abelian
    powers of exponent $\floor{\norm{3\alpha}/\norm{5\alpha}} + 1 = 2$ can be formed. The word $(00100)^2$ is not a
    factor of the Fibonacci word since it contains $000$. Indeed, we see using \autoref{lem:max_exponent} that the
    exponent for this class is $1$.

    Interestingly if $m = 7$, then the exponent for each equivalence class is $1$. The reason is that $\norm{7\alpha}$
    is large: we have $\norm{7\alpha} \approx 0.33$ whereas $\norm{5\alpha} \approx 0.09$. The $k$-abelian equivalence
    relation for $k > 1$ differs in this respect from abelian equivalence: it follows from
    \cite[Theorem~4.7]{2016:abelian_powers_and_repetitions_in_sturmian_words} that in any Sturmian word there exists an
    abelian square of period $m$ for each $m \geq 1$.
	\end{example}

  As the number $\max \mathcal{I}_{k,m}$ is generally difficult to find, let us argue next that when $m$ is chosen
  suitably then, in order to find $\abexp{k,\alpha}{m}$, it is sufficient to study the level $2k - 2$ intervals. As in
  \autoref{ssec:eq_classes}, the points
  $\mathcal{D}_{k,m} = \{0, \{-\alpha\}, \{-2\alpha\}, \ldots, \{-(k-1)\alpha\}\}$ together with the points
  $\mathcal{S} = R^{-(m - (k - 1))}(\mathcal{D}_{k,m}) = \{\{-(m - (k - 1))\alpha\}, \ldots, \{-m\alpha\}\}$ determine
  the intervals $\mathcal{I}_{k,m}$ of the $k$-abelian equivalence classes. Suppose now that $\norm{m\alpha}$ is
  sufficiently small. Then the points $R^m(\mathcal{S}) = R^{k-1}(D_{k,m})$ are close to the points $\mathcal S$. In
  fact, when comparing the intervals $\mathcal{I}_{k,m}$ defined by the points $\mathcal{D}_{k,m} \cup \mathcal{S}$ to
  those intervals defined by the points $\mathcal{D}_{k,m} \cup R^{k - 1}(\mathcal{D}_{k,m})$, we see that some
  intervals are shortened by $\norm{m\alpha}$ and some intervals are lengthened by $\norm{m\alpha}$, but the order of
  the points is the same whenever $\norm{m\alpha}$ is small enough. The points $\{-m\alpha\}$ and $0$ however merge,
  but this is irrelevant when considering $\max \mathcal{I}_{k,m}$ as we only lose a short interval of length
  $\norm{m\alpha}$. Now
  \begin{equation*}
    \mathcal{D}_{k,m} \cup R^{k - 1}(\mathcal{D}_{k,m}) = \{\{-(k-1)\alpha\}, \ldots, \{-\alpha\}, 0, \alpha, \ldots, \{(k-1)\alpha\}\}.
  \end{equation*}
  Using the fact that $R$ is an isometry, we can study the set
  $R^{-(k - 1)}(\mathcal{D}_{k,m} \cup R^{k - 1}(\mathcal{D}_{k,m}))$ instead. This set is the set of endpoints of the
  level $2k - 2$ intervals. It is quite obvious from the preceding that $\norm{m\alpha}$ is small enough whenever
  $\norm{m\alpha} < \min L(2k - 2)$. We have thus argued that whenever $\norm{m\alpha} < \min L(2k - 2)$, we have
  \begin{equation*}
    \abs{\max \mathcal{I}_{k,m} - \max L(2k - 2)} \leq \norm{m\alpha}.
  \end{equation*}
  Therefore we have proved the following lemma.

  \begin{lemma}\label{lem:approximate_exponent}
    Let $m$ be a positive integer and suppose that $\norm{m\alpha} < \min L(2k-2)$. Then
    \begin{equation*}
      \Abs{\Floor{\frac{\max L(2k-2)}{\norm{m\alpha}}} - \abexp{k,\alpha}{m}} \leq 1.
    \end{equation*}
  \end{lemma}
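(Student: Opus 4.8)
The plan is to read the lemma off from the two facts already in hand: the exact value of $\abexp{k,\alpha}{m}$ provided by \autoref{lem:max_exponent}, and the geometric estimate
\[
  \Abs{\max \mathcal{I}_{k,m} - \max L(2k-2)} \leq \norm{m\alpha}
\]
established in the paragraph above under the standing hypothesis $\norm{m\alpha} < \min L(2k-2)$. Abbreviating $A = \max \mathcal{I}_{k,m}$, $B = \max L(2k-2)$ and $d = \norm{m\alpha}$, \autoref{lem:max_exponent} gives $\abexp{k,\alpha}{m} = \Floor{A/d} + \gamma$ with $\gamma \in \{0,1\}$, so the whole problem reduces to comparing the integer $\Floor{B/d}$ with $\Floor{A/d} + \gamma$.

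The first step is the purely arithmetic observation that two positive reals differing by at most $d$ have quotients by $d$ whose integer parts differ by at most one: from $B - d \leq A \leq B + d$ we obtain $\Floor{B/d} - 1 \leq \Floor{A/d} \leq \Floor{B/d} + 1$. This already yields the upper estimate, since $\Floor{B/d} \leq \Floor{A/d} + 1 \leq \abexp{k,\alpha}{m} + 1$; and it yields the lower estimate whenever $\gamma = 0$, because then $\Floor{B/d} \geq \Floor{A/d} - 1 = \abexp{k,\alpha}{m} - 1$. Thus only the lower estimate in the case $\gamma = 1$ remains, where one must rule out that $\Floor{A/d}$ exceeds $\Floor{B/d}$ by a full unit, i.e.\ that a multiple of $d$ lies in the interval $(B, A]$ while the correction $\gamma = 1$ is also present.

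This last case is exactly where I expect the real work to lie, and where the hypothesis $\norm{m\alpha} < \min L(2k-2)$ is indispensable. My plan is to exploit the rigidity of the map taking the defining points $\mathcal{D}_{k,m} \cup R^{k-1}(\mathcal{D}_{k,m})$ of the level $2k-2$ partition to the defining points $\mathcal{D}_{k,m} \cup \mathcal{S}$ of $\mathcal{I}_{k,m}$: since $d$ is smaller than every level $2k-2$ gap, no two points change cyclic order, the displaced points all move by the single isometry $R^{-m}$ through arc length $d$, and the coalescence of $0$ with $\{-m\alpha\}$ only inserts a negligible interval of length $d$. Consequently each surviving interval has its length changed by $0$ or by $\pm d$, so $A$ is a level $2k-2$ gap length altered by at most one step $d$; using the three-distance description of the level $2k-2$ gaps together with $d < \min L(2k-2)$, I would argue that the interval realizing the maximum cannot simultaneously jump an integer multiple of $d$ and carry the correction $\gamma = 1$, which should pin the difference down to at most $1$. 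Verifying this incompatibility rigorously — in particular controlling how the length of the maximizing interval behaves under the shift $R^{-m}$ — is the step I expect to be the main obstacle.
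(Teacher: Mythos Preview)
Your approach is exactly the paper's: the ``proof'' in the paper is nothing more than the paragraph preceding the lemma, which establishes $\abs{\max \mathcal{I}_{k,m} - \max L(2k-2)} \leq \norm{m\alpha}$ and then declares ``Therefore we have proved the following lemma.'' There is no separate argument beyond combining this inequality with \autoref{lem:max_exponent}.

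The subtlety you isolate---the case $\gamma = 1$ together with $\lfloor A/d\rfloor = \lfloor B/d\rfloor + 1$---is genuine, and the paper does not address it. Taken at face value, the bare inequality $\abs{A-B}\leq d$ and \autoref{lem:max_exponent} yield only
\[
\Abs{\Floor{B/d} - \abexp{k,\alpha}{m}} \leq 2,
\]
not $\leq 1$: from $\lfloor A/d\rfloor - \lfloor B/d\rfloor \in \{-1,0,1\}$ and $\gamma\in\{0,1\}$ one gets $\lfloor A/d\rfloor + \gamma - \lfloor B/d\rfloor \in \{-1,0,1,2\}$. So you are being more careful than the paper here, not less. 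Your plan to exploit the rigidity of the perturbation (each interval length changes by exactly $0$ or $\pm d$, plus one extra interval of length $d$) together with the three-distance structure of the level $2k-2$ partition is the natural route if one insists on the constant $1$, but the paper does not carry this out.

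For the record, the distinction is immaterial downstream: \autoref{prp:convergents_enough} already tolerates a slack of $+2$, and \autoref{thm:main_relation} passes to a $\limsup$ where any $O(1)$ error vanishes. So the ``main obstacle'' you anticipate is not one you actually need to overcome to recover everything the paper uses the lemma for; the weaker bound $\leq 2$ that drops out of your first paragraph already suffices.
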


  This lemma shows that the exponents of $k$-abelian powers grow arbitrarily large (as we can make $\norm{m\alpha}$ as
  small as desired). A more general result was obtained in
  \cite[Theorem~5.4]{2013:on_a_generalization_of_abelian_equivalence_and_complexity_of_infinite}.

  With the results so far, we are able to show that for determining $\act{k}{\alpha}$ it is sufficient to consider
  $\abexp{k,\alpha}{m}$ only when $m$ is a denominator of a convergent. Recall that $q_t$ refers to the denominator of
  the $t^\text{th}$ convergent of $\alpha$.

  \begin{proposition}\label{prp:convergents_enough}
    For all large enough $t$, we have $\abexp{k,\alpha}{m} \leq \abexp{k,\alpha}{q_t} + 2$ for all
    $1 \leq m < q_{t+1}$.
  \end{proposition}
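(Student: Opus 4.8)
The plan is to combine the best approximation property of the convergents with \autoref{lem:approximate_exponent}, splitting the range $1 \leq m < q_{t+1}$ according to whether $\norm{m\alpha}$ is small or large relative to the fixed positive constant $\min L(2k-2)$ (recall that $2k-2$, hence $L(2k-2)$, does not depend on $t$ or $m$). The crucial observation is that, by the best approximation property, every $m$ with $1 \leq m < q_{t+1}$ satisfies $\norm{m\alpha} \geq \norm{q_t\alpha}$, so among all periods in this range it is $q_t$ that realizes the smallest value of $\norm{m\alpha}$ and therefore, by \autoref{lem:max_exponent}, essentially the largest exponent. Since $\norm{q_t\alpha} \to 0$ as $t \to \infty$, for all large $t$ we have $\norm{q_t\alpha} < \min L(2k-2)$, so \autoref{lem:approximate_exponent} applies to $q_t$.

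First I would treat the principal case where $\norm{m\alpha} < \min L(2k-2)$, so that \autoref{lem:approximate_exponent} applies to $m$ as well. Using the upper bound from that lemma for period $m$, the monotonicity $\Floor{\max L(2k-2)/\norm{m\alpha}} \leq \Floor{\max L(2k-2)/\norm{q_t\alpha}}$ (which follows from $\norm{m\alpha} \geq \norm{q_t\alpha}$), and the lower bound from the lemma for period $q_t$, one obtains
\begin{equation*}
  \abexp{k,\alpha}{m} \leq \Floor{\frac{\max L(2k-2)}{\norm{m\alpha}}} + 1 \leq \Floor{\frac{\max L(2k-2)}{\norm{q_t\alpha}}} + 1 \leq \abexp{k,\alpha}{q_t} + 2.
\end{equation*}

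Next I would dispose of the remaining case $\norm{m\alpha} \geq \min L(2k-2)$. Here \autoref{lem:approximate_exponent} is unavailable, but \autoref{lem:max_exponent} gives a crude uniform bound: since $\max \mathcal{I}_{k,m} \leq 1$, we have $\abexp{k,\alpha}{m} \leq 1/\min L(2k-2) + 1 =: C$, a constant independent of $m$ and $t$. On the other hand, the lower bound in \autoref{lem:approximate_exponent} forces $\abexp{k,\alpha}{q_t} \to \infty$ as $t \to \infty$ (because $\norm{q_t\alpha} \to 0$), so for all sufficiently large $t$ we have $\abexp{k,\alpha}{q_t} + 2 \geq C$, and the desired inequality holds trivially. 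I expect the only genuine subtlety to be recognizing this dichotomy: one must not attempt to apply \autoref{lem:approximate_exponent} when $\norm{m\alpha}$ is not small, and instead observe that the exponent is then bounded by an absolute constant while the right-hand side diverges with $t$. Once the two cases are separated, each is short, and the arithmetic heart of the argument is simply the best approximation property together with the fact that the floor in \autoref{lem:max_exponent} is increasing in $1/\norm{m\alpha}$.
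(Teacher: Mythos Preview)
Your proposal is correct and follows essentially the same route as the paper: the identical dichotomy on whether $\norm{m\alpha} < \min L(2k-2)$, the best approximation property to compare $\norm{m\alpha}$ with $\norm{q_t\alpha}$, \autoref{lem:approximate_exponent} for both the upper bound at $m$ and the lower bound at $q_t$ in the principal case, and the crude uniform bound via \autoref{lem:max_exponent} together with $\abexp{k,\alpha}{q_t} \to \infty$ in the remaining case. Your use of $\max \mathcal{I}_{k,m} \leq 1$ in the second case is in fact slightly cleaner than the paper's corresponding estimate.
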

  \begin{proof}
    Let $t \geq 1$, and assume that $t$ be large enough so that $\norm{q_t\alpha} < \min L(2k-2)$. Suppose that $m$ is
    an integer such that $1 \leq m < q_{t+1}$. By the best approximation property of the convergents, we have
    $\norm{m\alpha} \geq \norm{q_t\alpha}$. Suppose first that $\norm{m\alpha} < \min L(2k-2)$. Then by
    \autoref{lem:approximate_exponent}, we have
    \begin{equation*}
      \abexp{k,\alpha}{m} \leq \frac{\max L(2k-2)}{\norm{m\alpha}} + 1 \leq \frac{\max L(2k-2)}{\norm{q_t\alpha}} + 1,
    \end{equation*}
    so, by the same lemma, we have $\abexp{k,\alpha}{m} \leq \abexp{k,\alpha}{q_t} + 2$. Suppose next that
    $\norm{m\alpha} \geq \min L(2k-2)$. Then
    \begin{equation*}
      \frac{\max L(m)}{\norm{m\alpha}} \leq \frac{\max L(m)}{\min L(2k-2)} \leq \frac{1}{\min L(2k-2)},
    \end{equation*}
    so $\abexp{k,\alpha}{m}$ is bounded by a constant. Thus $\abexp{k,\alpha}{m} < \abexp{k,\alpha}{q_t}$ for all large
    enough $t$. The sequence $(\abexp{k,\alpha}{q_i})_i$ reaches arbitrarily high values due to
    \autoref{lem:approximate_exponent}.
  \end{proof}

  \autoref{prp:convergents_enough} can be improved: $\abexp{k,\alpha}{m} \leq \abexp{k,\alpha}{q_t} + 1$ for all
  $1 \leq m < q_{t+1}$ and $t$ large enough. Proving this would complicate the argument significantly, and we do not
  need the improved statement in this paper. It is very well possible that
  $\abexp{k,\alpha}{q_t} > \abexp{k,\alpha}{q_{t+1}}$. For example, if $k = 2$ and say $\alpha = [0; 3, 1, 1, 1, 100,
  \overline{1}]$, then the sequence of denominators of convergents is $1$, $3$, $4$, $7$, $\ldots$, and it is readily
  computed that $\abexp{k,\alpha}{4} = 6 > 5 = \abexp{k,\alpha}{7}$. On the other hand, if $k = 1$, then we have
  $\abexp{k,\alpha}{m} < \abexp{k,\alpha}{q_t}$ for all $t$ and $1 \leq m < q_t$ as can be readily observed from
  \cite[Lemma~4.7]{2016:abelian_powers_and_repetitions_in_sturmian_words}.

  For $t$ large enough, let $m$ be an integer such that $q_t \leq m < q_{t+1}$. It follows from
  \autoref{prp:convergents_enough} that
  \begin{equation*}
    \frac{\abexp{k,\alpha}{m}}{m} \leq \frac{\abexp{k,\alpha}{q_t} + 2}{q_t},
  \end{equation*}
  so we can conclude using \autoref{lem:approximate_exponent} that
  \begin{equation*}
    \act{k}{\alpha} = \limsup_{t \to \infty} \frac{\abexp{k,\alpha}{q_t}}{q_t} = \limsup_{t \to \infty} \frac{\max L(2k-2)}{q_t\norm{q_t\alpha}}.
  \end{equation*}
  When $k = 1$, we obtain
  \begin{equation*}
    \act{1}{\alpha} = \limsup_{t \to \infty} \frac{1}{q_t\norm{q_t\alpha}},
  \end{equation*}
  so
  \begin{equation*}
    \act{k}{\alpha} = \max L(2k-2) \cdot \act{1}{\alpha}.
  \end{equation*}
  Let us restate the result.

  \begin{theorem}\label{thm:main_relation}
    We have $\act{k}{\alpha} = \max L(2k-2) \cdot \act{1}{\alpha}$ for all $k \geq 1$.
  \end{theorem}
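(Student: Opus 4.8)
The plan is to exploit the reductions already established so that only periods equal to convergent denominators matter. First I would use \autoref{prp:convergents_enough} to replace the full limsup defining $\act{k}{\alpha}$ by a limsup over the subsequence $m = q_t$. Concretely, fix $t$ large enough that the proposition applies, and take any $m$ with $q_t \leq m < q_{t+1}$. The proposition gives $\abexp{k,\alpha}{m} \leq \abexp{k,\alpha}{q_t} + 2$, and since $m \geq q_t$ we obtain $\abexp{k,\alpha}{m}/m \leq (\abexp{k,\alpha}{q_t} + 2)/q_t$. As every integer $m$ lies in exactly one block $[q_t, q_{t+1})$ and $q_t \to \infty$, the additive constant $2$ becomes negligible after division, and taking the limit superior sandwiches the general limsup by the one over convergents. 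This yields $\act{k}{\alpha} = \limsup_{t \to \infty} \abexp{k,\alpha}{q_t}/q_t$.

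Next I would bring in \autoref{lem:approximate_exponent}. Since $\norm{q_t\alpha} \to 0$, for all large $t$ we have $\norm{q_t\alpha} < \min L(2k-2)$, so the lemma gives $\Abs{\Floor{\max L(2k-2)/\norm{q_t\alpha}} - \abexp{k,\alpha}{q_t}} \leq 1$. Combining this with the elementary bound $\Floor{x} = x + O(1)$, the quantity $\abexp{k,\alpha}{q_t}$ equals $\max L(2k-2)/\norm{q_t\alpha}$ up to an additive $O(1)$ error. Dividing by $q_t$, that error contributes only $O(1/q_t)$, which tends to $0$. Hence the sequences $\abexp{k,\alpha}{q_t}/q_t$ and $\max L(2k-2)/(q_t \norm{q_t\alpha})$ differ by a null sequence. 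I would then invoke the standard fact that two sequences differing by $o(1)$ share the same limit superior (interpreted in the extended reals, so the statement holds even if the value is $+\infty$), giving
\begin{equation*}
  \act{k}{\alpha} = \limsup_{t \to \infty} \frac{\max L(2k-2)}{q_t \norm{q_t\alpha}}.
\end{equation*}

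Finally, since $\max L(2k-2)$ is a fixed positive constant depending only on $k$ and $\alpha$ but not on $t$, I can factor it out of the limit superior. For $k = 1$ we have $2k - 2 = 0$, and the single point $0$ leaves the circle as one interval of full length $1$, so $\max L(0) = 1$ and the remaining limsup is exactly $\act{1}{\alpha}$. This delivers $\act{k}{\alpha} = \max L(2k-2) \cdot \act{1}{\alpha}$. The routine algebra is the easy part; the one point demanding care is the passage through the limit superior, namely justifying that the floor and the $O(1)$ comparison error genuinely wash out after division by $q_t \to \infty$, and that the constant factor may be extracted even when the common value is infinite. This is precisely the step where the intersection with $\R$ in the definition of $\lags{k}$ becomes relevant, since the finite values of $\act{k}{\alpha}$ correspond exactly to the finite values of $\act{1}{\alpha}$ under the scaling by $\max L(2k-2)$.
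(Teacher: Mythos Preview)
Your proposal is correct and follows essentially the same route as the paper: reduce to convergent denominators via \autoref{prp:convergents_enough}, replace $\abexp{k,\alpha}{q_t}$ by $\max L(2k-2)/\norm{q_t\alpha}$ up to a bounded error via \autoref{lem:approximate_exponent}, divide by $q_t$, and factor out the constant (noting $\max L(0)=1$ for $k=1$). You are simply more explicit than the paper about why the $O(1)$ errors disappear in the limsup and why the factorization is valid even in the infinite case.
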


  Notice that $\act{1}{\alpha}$ is finite if and only if $\alpha$ has bounded partial quotients; see
  \eqref{eq:lagrange}. Therefore $\act{k}{\alpha}$ is finite if and only if $\alpha$ has bounded partial quotients. As
  is well-known, numbers with bounded partial quotients comprise a set of measure zero.

  As mentioned in \autoref{sec:preliminaries}, equivalent numbers have the same Lagrange constant. By
  \autoref{thm:main_relation}, this is no longer true when $k > 1$ because $\max L(2k-2)$ depends on $\alpha$. It is
  not difficult to convince oneself that the points obtained in \autoref{thm:main_relation} from a single class of
  equivalent numbers form a dense set. This is what we shall prove next. As a corollary we obtain \autoref{thm:dense},
  which states that the $k$-Lagrange spectrum $\lags{k}$ is itself dense when $k > 1$. In the statement of the
  following lemma, by $\max L_\beta(\ell)$ we mean the maximal length of a level $\ell$ interval of slope $\beta$.

  \begin{lemma}\label{lem:density}
    Let $\alpha$ be irrational. The set $\{\max L_\beta(\ell) : \text{$\beta$ is equivalent to $\alpha$}\}$ is
    contained and dense in $(\tfrac{1}{\ell+1}, 1)$ for all $\ell > 1$.
  \end{lemma}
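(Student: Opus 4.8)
The plan is to prove the two assertions separately. Containment is elementary. The level $\ell$ intervals of slope $\beta$ are the $\ell+1$ gaps cut out on $\T$ by the points $0, \{-\beta\}, \ldots, \{-\ell\beta\}$, and these gaps are positive and sum to $1$. Hence $\max L_\beta(\ell) \geq \frac{1}{\ell+1}$ by averaging, with strict inequality since equality would force all gaps to equal $\frac{1}{\ell+1}$ and hence the partition points to be exactly $\{0, \frac{1}{\ell+1}, \ldots, \frac{\ell}{\ell+1}\}$, making $\{-\beta\}$ rational and therefore $\beta$ rational. The upper bound $\max L_\beta(\ell) < 1$ is immediate since there is more than one gap and every gap is positive. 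Thus $\max L_\beta(\ell) \in (\frac{1}{\ell+1}, 1)$ for every irrational $\beta$, which gives containment.

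For density, the engine is an explicit computation valid for small slopes: I claim $\max L_\beta(\ell) = 1 - \ell\beta$ whenever $\beta \in (0, \frac{1}{\ell+1})$. Indeed, for such $\beta$ we have $j\beta < 1$ for all $j \leq \ell$, so the points $\{-j\beta\}$ are just $0$ together with $1-\beta, 1-2\beta, \ldots, 1-\ell\beta$; listed in order, the induced gaps are $\ell$ gaps of length $\beta$ and a single gap of length $1-\ell\beta$. Since $\beta < \frac{1}{\ell+1}$ gives $1 - \ell\beta > \beta$, the maximal gap is $1 - \ell\beta$. Consequently the continuous, strictly decreasing map $\beta \mapsto 1 - \ell\beta$ carries $(0, \frac{1}{\ell+1})$ bijectively onto $(\frac{1}{\ell+1}, 1)$, so it suffices to fill $(0, \frac{1}{\ell+1})$ densely with slopes equivalent to $\alpha$.

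For this I would invoke the density of the equivalence class of $\alpha$. Given any target $\gamma \in (0, \frac{1}{\ell+1})$ with continued fraction $[0; c_1, c_2, \ldots]$ (note $c_1 \geq \ell+1$ automatically) and any $n$, the number $\beta_n = [0; c_1, \ldots, c_n, a_1, a_2, \ldots]$ obtained by appending the tail of $\alpha$ is irrational and equivalent to $\alpha$, and it shares the partial quotients $c_0, \ldots, c_n$ with $\gamma$; hence $\beta_n$ and $\gamma$ share the convergent $p_n/q_n$, giving $|\beta_n - \gamma| < 2/q_n^2 \to 0$. Since such targets are dense in $(0, \frac{1}{\ell+1})$, so are the $\beta_n$, and applying $\beta \mapsto 1-\ell\beta$ yields density of $\{\max L_\beta(\ell) : \beta \text{ equivalent to } \alpha\}$ in $(\frac{1}{\ell+1}, 1)$. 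The one point I would treat as the crux is the restriction $\beta < \frac{1}{\ell+1}$ in the explicit computation: it is precisely this regime in which the matched tail of $\alpha$ is irrelevant to $\max L_\beta(\ell)$ and the clean formula $1-\ell\beta$ holds, whereas for larger $\beta$ the tail would genuinely intervene. Everything else—density of the equivalence class and continuity of $\beta \mapsto 1 - \ell\beta$—is routine.
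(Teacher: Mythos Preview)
Your proof is correct and follows essentially the same approach as the paper: both arguments hinge on the identity $\max L_\beta(\ell) = 1 - \ell\beta$ for $\beta \in (0,\tfrac{1}{\ell+1})$ together with the fact that one can approximate any such $\beta$ arbitrarily well by irrationals equivalent to $\alpha$ via continued-fraction tail-splicing. Your write-up is in fact more explicit than the paper's on several points (the strictness of the lower bound, the gap computation, and the verification that $c_1 \geq \ell+1$ keeps $\beta_n$ in the regime where the formula applies), but the underlying strategy is the same.
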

  \begin{proof}
    Clearly $\max L_\beta(\ell) > \tfrac{1}{\ell + 1}$ since there are $\ell + 1$ level $\ell$ intervals.
    Let $\gamma \in \smash[b]{(\tfrac{1}{\ell+1}, 1)}$, and suppose without loss of generality that it is irrational.
    By cutting the continued fraction expansion of $1-\gamma$ after finitely many partial quotients, we obtain a
    fraction that is as close as $1-\gamma$ as we desire. Thus we can find a rational $\beta$ such that $\ell \beta$ is
    arbitrarily close to $1-\gamma$ (from either side). Now form an irrational $\beta'$ by continuing the continued
    fraction expansion of $\beta$ in such a way that it is equivalent to $\alpha$. By selecting the partial quotients
    appropriately, we find that $\ell \beta'$ is arbitrarily close to $1-\gamma$. Consider now the level $\ell$
    intervals of slope $\beta'$. The longest such interval clearly has length $1-\ell\beta'$ since
    $\gamma > \tfrac{1}{\ell+1}$. As $1-\ell\beta'$ is as close to $\gamma$ as we like, the claim follows.
  \end{proof}

  As the smallest element of the Lagrange spectrum is $\sqrt{5}$, \autoref{thm:main_relation} and \autoref{lem:density}
  imply the following result.

  \begin{theorem}\label{thm:endpoints}
    Let $k > 1$. Then $\lags{k} \subseteq (\tfrac{\sqrt{5}}{2k-1}, \infty)$ and $\tfrac{\sqrt{5}}{2k-1}$ is the least
    accumulation point of $\lags{k}$. In particular, the set $\lags{k}$ is not closed.
  \end{theorem}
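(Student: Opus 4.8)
The plan is to read off both claims directly from the factorization $\act{k}{\alpha} = \max L(2k-2) \cdot \act{1}{\alpha}$ supplied by \autoref{thm:main_relation}, bounding each factor separately. Throughout I set $\ell = 2k - 2$; since $k \geq 2$ we have $\ell \geq 2 > 1$, so \autoref{lem:density} is available.

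For the inclusion, I would lower-bound the two factors. On the one hand, the containment half of \autoref{lem:density} gives $\max L(2k-2) = \max L_\alpha(\ell) > \tfrac{1}{\ell + 1} = \tfrac{1}{2k-1}$ for every slope $\alpha$. On the other hand, $\act{1}{\alpha} = \lambda(\alpha) \geq \sqrt{5}$ because $\sqrt 5$ is the least element of the Lagrange spectrum (and, should $\act{k}{\alpha}$ be finite, so is $\act{1}{\alpha}$, since both are finite exactly for slopes with bounded partial quotients). Multiplying these bounds yields $\act{k}{\alpha} > \tfrac{\sqrt 5}{2k-1}$, i.e. $\lags{k} \subseteq (\tfrac{\sqrt 5}{2k-1}, \infty)$.

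For the accumulation point, the idea is to hold the Lagrange constant fixed at its minimum while driving $\max L(2k-2)$ down to its infimum. Concretely, fix an irrational $\alpha$ with $\act{1}{\alpha} = \sqrt 5$ (a slope equivalent to the golden ratio). By \autoref{lem:density} the set $\{\max L_\beta(\ell) : \text{$\beta$ equivalent to $\alpha$}\}$ is dense in $(\tfrac 1{2k-1}, 1)$, so I can choose slopes $\beta_n$, each equivalent to $\alpha$, with $\max L_{\beta_n}(2k-2) \to \tfrac 1{2k-1}$ from above. Equivalent numbers share the same Lagrange constant, hence $\act{1}{\beta_n} = \act{1}{\alpha} = \sqrt 5$, and \autoref{thm:main_relation} gives
\begin{equation*}
  \act{k}{\beta_n} = \max L_{\beta_n}(2k-2) \cdot \sqrt 5 \longrightarrow \tfrac{\sqrt 5}{2k-1}.
\end{equation*}
These are members of $\lags{k}$ lying strictly above the limit, so $\tfrac{\sqrt 5}{2k-1}$ is a genuine accumulation point.

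Finally, minimality and non-closedness are formal consequences. Since $\lags{k} \subseteq (\tfrac{\sqrt 5}{2k-1}, \infty)$, its closure lies in $[\tfrac{\sqrt 5}{2k-1}, \infty)$, so no accumulation point can be smaller than $\tfrac{\sqrt 5}{2k-1}$; combined with the previous paragraph this makes $\tfrac{\sqrt 5}{2k-1}$ the least accumulation point. As it is an accumulation point not belonging to $\lags{k}$, the set fails to contain all its limit points and is therefore not closed. The only delicate point is the one exploited in the third paragraph: passing to an equivalent slope preserves $\act{1}{\cdot}$ but not $\max L(2k-2)$, and it is precisely this asymmetry, guaranteed by \autoref{lem:density}, that lets a single equivalence class of slopes contribute a whole sequence in $\lags{k}$ accumulating at the left endpoint.
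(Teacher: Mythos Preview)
Your proof is correct and follows exactly the route the paper indicates: the paper merely states that the theorem is an immediate consequence of \autoref{thm:main_relation}, \autoref{lem:density}, and the fact that $\sqrt{5}$ is the least element of the Lagrange spectrum, and you have simply spelled out how these ingredients combine. Your treatment of the finiteness issue and of the accumulation-point and non-closedness conclusions is precisely the intended elaboration.
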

  
  This proposition should be compared with the fact that $\lags{1}$ is closed; cf.
  \cite[Theorem~2 of Chapter~3]{1989:the_markoff_and_lagrange_spectra}. Notice that it also follows that when $k > 1$,
  the Fibonacci word no longer has minimal critical $k$-abelian exponent among all Sturmian words.

  Let us then recall some remarkable facts about the Lagrange spectrum. \emph{Hall's ray} is the largest half-line
  contained in $\lags{1}$. It was proven by Hall that the half-line $[6, \infty)$ is contained in $\lags{1}$
  \cite{1947:on_the_sum_and_products_of_continued_fractions}. By series of improvements by several researchers, it was
  finally determined by Freiman \cite{1975:diophantine_approximation_and_geometry_of_numbers} that Hall's ray equals
  $[c_F, \infty)$, where $c_F$ is the \emph{Freiman constant}
  \begin{equation*}
    c_F = \frac{2221564096+283748\sqrt{462}}{491993569} = 4.5278295661 \ldots
  \end{equation*}
  The detailed history and references can be found in \cite[Chapter~4]{1989:the_markoff_and_lagrange_spectra}.
  Hall's result together with \autoref{thm:main_relation} and \autoref{lem:density} imply the following theorem.

  \begin{theorem}\label{thm:dense}
    The $k$-Lagrange spectrum $\lags{k}$ is dense in $(\tfrac{\sqrt{5}}{2k-1}, \infty)$ when $k > 1$.
  \end{theorem}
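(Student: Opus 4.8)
The plan is to prove Theorem 4.13 (density of $\lags{k}$ in $(\tfrac{\sqrt{5}}{2k-1}, \infty)$ for $k > 1$) by combining the multiplicative structure from \autoref{thm:main_relation} with Hall's ray and the density of the level-interval lengths from \autoref{lem:density}. The key observation is that \autoref{thm:main_relation} gives $\act{k}{\alpha} = \max L(2k-2) \cdot \act{1}{\alpha}$, so to hit a target point $x \in (\tfrac{\sqrt{5}}{2k-1}, \infty)$ it suffices to find a slope $\beta$ whose abelian critical exponent $\act{1}{\beta}$ and level $2k-2$ interval length $\max L_\beta(2k-2)$ have product close to $x$. Since $\ell = 2k-2 > 1$ for $k > 1$, \autoref{lem:density} tells us the available values of $\max L_\beta(2k-2)$ form a dense subset of $(\tfrac{1}{2k-1}, 1)$, and crucially this density holds within any single equivalence class of slopes, so we retain full freedom over $\act{1}{\beta}$.

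First I would fix a target $x > \tfrac{\sqrt{5}}{2k-1}$ and an error tolerance $\varepsilon > 0$. The strategy is to select the two factors of the product separately. For the Lagrange-constant factor, I would choose a slope in a convenient equivalence class whose abelian critical exponent $\act{1}{\cdot}$ equals (or approximates) a prescribed value $K$ from the Lagrange spectrum. Because Hall's ray gives $[c_F, \infty) \subseteq \lags{1}$, every real number $K \geq c_F$ is realized as some $\act{1}{\alpha}$, so there is an entire continuum of attainable values of $K$; this is the source of the needed flexibility. Given $K$, I then need $\max L(2k-2)$ to be approximately $x/K$, and by \autoref{lem:density} this is achievable as long as $x/K$ lies in $(\tfrac{1}{2k-1}, 1)$, i.e., as long as $K$ lies in the range $(x, (2k-1)x)$.

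The main step is to verify that one can simultaneously meet both constraints — that the intersection of the admissible ranges is nonempty — and then to construct a single slope achieving both. Concretely, I would pick $K$ in the overlap of $[c_F, \infty)$ with $(x, (2k-1)x)$; since $x > \tfrac{\sqrt{5}}{2k-1}$ guarantees $(2k-1)x > \sqrt{5}$ and the interval $(x, (2k-1)x)$ is a genuine open interval of positive length, as long as it meets $[c_F,\infty)$ we have a valid choice of $K$. For such $K$ the target length $x/K$ falls strictly inside $(\tfrac{1}{2k-1}, 1)$. Then \autoref{lem:density} produces, inside the equivalence class of a slope realizing abelian critical exponent $K$, a slope $\beta$ with $\max L_\beta(2k-2)$ arbitrarily close to $x/K$; since $\act{1}{\cdot}$ is constant on equivalence classes (equal to $K$), \autoref{thm:main_relation} yields $\act{k}{\beta} = \max L_\beta(2k-2)\cdot K$ arbitrarily close to $x$.

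The hard part will be handling target points $x$ for which the naive range $(x, (2k-1)x)$ does not meet Hall's ray $[c_F, \infty)$ — namely small $x$ just above $\tfrac{\sqrt{5}}{2k-1}$, where one cannot appeal to the continuum of Lagrange constants but must instead use the discrete Markov part of $\lags{1}$ near $\sqrt{5}$ together with the fine tuning of $\max L_\beta(2k-2)$. For these $x$ I would instead fix $K$ to be a specific Lagrange constant near $\sqrt{5}$ (for instance $\sqrt{5}$ itself or the next Markov value), and absorb all the approximation into the length factor via \autoref{lem:density}; since $\max L_\beta(2k-2)$ ranges densely over all of $(\tfrac{1}{2k-1}, 1)$, the products $K \cdot \max L_\beta(2k-2)$ for a single such $K$ already sweep out a dense subset of $(\tfrac{K}{2k-1}, K)$, and letting $K$ range over the (infinitely many, accumulating) Markov values covers a dense subset of the lower part of $(\tfrac{\sqrt{5}}{2k-1}, \infty)$. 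Stitching the lower-range argument (discrete Markov values, accumulating at $\sqrt5$) together with the upper-range argument (Hall's ray continuum) to cover the whole half-line is the only genuinely delicate bookkeeping; everything else is a direct application of the two cited results.
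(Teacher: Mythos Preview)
Your approach is the paper's: combine \autoref{thm:main_relation}, \autoref{lem:density}, and Hall's ray. The paper's execution is shorter than you anticipate, however: the single choice $K=\sqrt{5}$ already yields density on $(\tfrac{\sqrt{5}}{2k-1},\sqrt{5})$, Hall's ray yields density on $(\tfrac{c_F}{2k-1},\infty)$, and since $c_F<6$ one has $\tfrac{c_F}{2k-1}\le 2<\sqrt{5}$ for every $k\ge 2$, so these two intervals already overlap and cover everything --- no further Markov values and no ``delicate bookkeeping'' are required.
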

  \begin{proof}
    By \autoref{lem:density} and Hall's result, the intervals $(\tfrac{\sqrt{5}}{2k-1}, \sqrt{5})$ and
    $(\tfrac{c_F}{2k-1}, \infty)$ are dense with points of $\lags{k}$. Now $c_F$ is at most $6$, so
    $\smash[t]{\tfrac{c_F}{2k-1} \leq 2 < \sqrt{5}}$ meaning that these dense sets overlap.
  \end{proof}

  We do not know if $\lags{k}$ contains a half-line when $k > 1$. If true, it is not a straightforward consequence of
  Hall's and Freiman's results: the union of the dense subsets obtained from each $\theta \in [c_F, \infty)$ by
  \autoref{lem:density} is not automatically a half-line. This poses an interesting open problem.

  \begin{question}
    Does $\lags{k}$ contain a half-line when $k > 1$? If so, what is the largest such half-line? Is it
    $(\tfrac{c_F}{2k - 1}, \infty)$?
  \end{question}

  It is conceivable that a point in $\lags{1}$ below $c_F$ could map to $\tfrac{c_F}{2k-1}$. Moreover, $\lags{1}$ could
  contain an interval below $c_F$ (see below) that could produce an interval into $\lags{k}$.

  The usual Lagrange spectrum is not dense between $\sqrt{5}$ and $c_F$. In fact, substantial amount of research has
  been done on maximal gaps occurring in this interval, see for instance
  \cite[Chapter~5]{1989:the_markoff_and_lagrange_spectra}. It is known for example that the set
  $[\sqrt{5}, 3] \cap \lags{1}$ is discrete and that the interior of the interval $[\sqrt{12}, \sqrt{13}]$ does not
  include any points of $\lags{1}$ while its endpoints are in $\lags{1}$. It is unknown if $\lags{1}$ contains an
  interval below $c_F$. The existence of such an interval could show that $\lags{k}$ also contains an interval below
  $\tfrac{c_F}{2k-1}$, but it is plausible that this could also happen for other reasons. For example, it is possible
  for uncountably many numbers to have the same Lagrange constant, so an interval could be produced by means of
  \autoref{lem:density}. One such example is the number $3$; it is the Lagrange constant of uncountably many numbers
  \cite[Theorem~3, Chapter~IV§6]{1992:continued_fractions}. We do not believe that this particular example would
  provide an interval; we just mention it as a possibility. It is known that the part of $\lags{1}$ below
  $\sqrt{689}/8$ has measure zero \cite{1982:hausdorff_dimensions_of_cantor_sets}. It seems to us that studying
  intervals in $\lags{k}$ for $k > 1$ is of comparable difficulty as the study of intervals in $\lags{1}$.

  Let us also point out that it is easy to come up with numbers greater than $\sqrt{5}/(2k-1)$ that are not in
  $\lags{k}$. The two smallest elements of $\lags{1}$ are $\sqrt{5}$ and $\sqrt{8}$, so any point in $\lags{k}$ between
  $\sqrt{5}/(2k-1)$ and $\sqrt{8}/(2k-1)$ is of the form $\max L_\alpha(2k-2) \cdot \sqrt{5}$ for some $\alpha$
  equivalent to the golden ratio. The number $\max L_\alpha(2k-2)$ is always irrational, so rational multiples of
  $\sqrt{5}$ between $\sqrt{5}/(2k-1)$ and $\sqrt{8}/(2k-1)$ are not in $\lags{k}$.

  \subsection{The Spectrum \texorpdfstring{$\lags{\infty}$}{L}}
  As mentioned in the introduction, when the critical exponent is considered for the equality relation, it is typical
  to just measure the supremum of fractional exponents, not the ratio of the exponent and the period. In this final
  subsection, we briefly remark what happens if we look at the ratio instead.

  Analogous to what we have done already, we set
  \begin{equation*}
    \act{\infty}{\alpha} = \limsup_{m\to\infty} \frac{\abexp{\infty,\alpha}{m}}{m},
  \end{equation*}
  where $\abexp{\infty,\alpha}{m}$ is the maximum integer exponent of a power of period $m$ occurring in a Sturmian
  word of slope $\alpha$. We further set
  $\lags{\infty} = \{\act{\infty}{\alpha} : \text{$\alpha$ is irrational}\} \cap \R$. We show next that the set
  $\lags{\infty}$ contains every nonnegative real number.

  \begin{proposition}
    We have $\lags{\infty} = \R_{\geq 0}$.
  \end{proposition}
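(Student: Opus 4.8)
The plan is to first reduce $\act{\infty}{\alpha}$ to a continued-fraction expression and then realize every nonnegative real as such a value by a direct construction. Since the relation $\sim_\infty$ is plain equality of factors, two factors of length $m$ are $\infty$-abelian equivalent exactly when they determine the same level $m$ interval. Repeating the geometric argument preceding \autoref{lem:max_exponent} verbatim, but with the level $m$ intervals $L(m)$ in place of the intervals $\mathcal{I}_{k,m}$, I would obtain
\begin{equation*}
  \abexp{\infty,\alpha}{m} = \Floor{\frac{\max L(m)}{\norm{m\alpha}}} + \gamma,
\end{equation*}
where $\gamma \in \{0,1\}$. Because $\max L$ is non-increasing in $m$ and $\norm{m\alpha} \geq \norm{q_t\alpha}$ whenever $q_t \leq m < q_{t+1}$ by the best approximation property, one has $\abexp{\infty,\alpha}{m}/m \leq \max L(q_t)/(q_t\norm{q_t\alpha}) + 1/m$, so the ratio over such a block is maximized, up to an additive $O(1/m)$, at $m = q_t$. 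Arguing as for \autoref{prp:convergents_enough}, this yields
\begin{equation*}
  \act{\infty}{\alpha} = \limsup_{t\to\infty} \frac{\max L(q_t)}{q_t\norm{q_t\alpha}}.
\end{equation*}

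Next I would determine $\max L(q_t)$. By the three-distance theorem, the $q_t+1$ points defining $L(q_t)$ split $\T$ into arcs of at most three lengths, and for all large $t$ the longest arc has length $\norm{q_{t-1}\alpha} + \norm{q_t\alpha}$ while the shortest is $\norm{q_t\alpha}$. Using the standard identity $\norm{q_{t-1}\alpha}/\norm{q_t\alpha} = [a_{t+1}; a_{t+2}, \ldots]$, which underlies \eqref{eq:lagrange}, this gives
\begin{equation*}
  \frac{\max L(q_t)}{q_t\norm{q_t\alpha}} = \frac{1}{q_t}\bigl([a_{t+1}; a_{t+2}, \ldots] + 1\bigr) = \frac{a_{t+1} + O(1)}{q_t},
\end{equation*}
where the $O(1)$ term lies in the bounded interval $(1,2)$ and is therefore absorbed in the limsup. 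Hence
\begin{equation*}
  \act{\infty}{\alpha} = \limsup_{t\to\infty} \frac{a_{t+1}}{q_t}.
\end{equation*}
It then remains to show $\{\limsup_{t} a_{t+1}/q_t : \alpha \text{ irrational}\}\cap\R = \R_{\geq 0}$. The inclusion in $\R_{\geq 0}$ is clear, and $0$ is attained by the golden ratio (all partial quotients equal $1$, so $a_{t+1}/q_t = 1/q_t \to 0$). For $c > 0$ I would build $\alpha = [0; a_1, a_2, \ldots]$ greedily: fix a sparse increasing sequence $t_1 < t_2 < \cdots$, set $a_{t_j+1} = \Floor{c\,q_{t_j}}$ at these indices (legitimate once $q_{t_j} \geq 1/c$), and set every other partial quotient to $1$. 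At a special index $a_{t_j+1}/q_{t_j} = \Floor{c\,q_{t_j}}/q_{t_j} \in (c - 1/q_{t_j}, c]$, so these ratios increase to $c$ from below and force $\limsup \geq c$; at every other index the ratio equals $1/q_t \to 0$, and since the special ratios never exceed $c$ we get $\limsup \leq c$. Thus $\act{\infty}{\alpha} = c$.

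The technical heart is the exact evaluation $\max L(q_t) = \norm{q_{t-1}\alpha} + \norm{q_t\alpha}$: carrying out the three-distance bookkeeping uniformly in the partial quotients is the step demanding the most care. In fact the two-sided bound $\norm{q_{t-1}\alpha} \leq \max L(q_t) \leq \norm{q_{t-1}\alpha} + \norm{q_t\alpha}$ already suffices, since it changes the displayed ratio by at most $1/q_t \to 0$, and the correction is negligible precisely along the subsequences where $a_{t+1}/q_t$ is large, which are the only ones relevant to the limsup. The remaining point demanding attention is confirming that the construction produces the limsup exactly, for which the key observation is that inserting a single large partial quotient makes $q_{t_j+1}$ enormous, so that all ratios in the ensuing block of $1$'s are negligible and cannot overshoot $c$.
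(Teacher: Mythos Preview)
Your argument is correct and follows essentially the same route as the paper: both reduce $\act{\infty}{\alpha}$ to $\limsup_t a_{t+1}/q_t$ (the paper's version is $\limsup_t (a_{t+1}+2)/q_t$, which is the same quantity) and then realize every nonnegative real by setting almost all partial quotients to $1$ and choosing $a_{t_j+1} \approx c\,q_{t_j}$ along a sparse subsequence. The only notable difference is in how the key identity $\abexp{\infty,\alpha}{q_t} = a_{t+1} + O(1)$ is obtained: the paper simply cites the known formula of Damanik--Lenz that the highest integer exponent of a power of period $q_t$ is $a_{t+1}+2$ (and that non-convergent periods give exponent at most $2$), whereas you rederive the same thing from the geometric framework via the three-distance theorem and the identity $\norm{q_{t-1}\alpha}/\norm{q_t\alpha} = [a_{t+1}; a_{t+2}, \ldots]$. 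Your route is more self-contained and fits the paper's geometric viewpoint, at the cost of the three-distance bookkeeping you flag; the paper's route is shorter but relies on an external reference.
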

  \begin{proof}
    Consider powers occurring in a Sturmian word of slope $\alpha$ having continued fraction expansion
    $[0;a_1,a_2,\ldots]$ and sequence of convergents $(p_t/q_t)_t$. It is well-known that if $m$ is not a denominator
    of a convergent of $\alpha$, then any power of period $m$ has exponent at most $2$; see, e.g.,
    \cite[Lemma~3.6]{2002:the_index_of_sturmian_sequences} or \cite[Theorem~4.6.5]{diss:jarkko_peltomaki}. Moreover, if
    $m = q_t$ with $t > 1$, then the highest integer exponent of a power of period $m$ is $a_{t+1} + 2$
    \cite[Lemma~3.4]{2002:the_index_of_sturmian_sequences}, \cite[Theorem~4.6.5]{diss:jarkko_peltomaki}. Given that we
    have chosen the partial quotients $a_1$, $a_2$, $\ldots$, $a_t$ and thus determined the convergent $q_t$, we have
    complete freedom to choose $a_{t+1}$ to make the ratio $(a_{t+1} + 2)/q_t$ to behave the way we like.
    
    If the sequence $(a_t)_t$ of partial quotients is bounded, then we clearly have $\act{\infty}{\alpha} = 0$ because
    the sequence $(q_t)_t$ is increasing. Hence $0 \in \lags{\infty}$. Let then $\lambda$ be a fixed positive real
    number, and let $k_1$ be the least integer such that $k_1 > 1$ and that there exist nonnegative integers $r_1$ and
    $s_1$ such that $0 \leq s_1 < q_{k_1}$ and $\lambda - (r_1 + s_1/q_{k_1}) < \tfrac12$. Set $a_{1,1} = a_1$,
    $a_{1,2} = a_2$, $\ldots$, $a_{1,k_1} = a_{k_1}$, $a_{1,k_1 + 1} = \max\{1, q_{k_1}(r_1 + s_1/q_{k_1}) - 2\}$, and
    let $a_{1,t} = 1$ for $t > k_1 + 1$ to obtain a new number $\alpha_1$ with continued fraction expansion
    $[0; a_{1,1}, a_{1,2}, \ldots]$. Analogously, select then $k_2$ to be the least positive integer such that
    $k_2 > k_1$ and that there exist nonnegative integers $r_2$ and $s_2$ such that
    $\lambda - (r_2 + s_2/q_{1,k_2}) < \tfrac14$ where $q_{1,k_2}$ is the denominator of the $\smash[t]{k_2^\text{th}}$
    convergent of $\alpha_1$. Set $a_{2,1} = a_{1,1}$, $\ldots$, $a_{2,k_2} = a_{1,k_2}$,
    $a_{2,k_2 + 1} = \max\{1, q_{1,k_2}(r_2 + s_2/q_{1,k_2}) - 2\}$, and let $a_{2,t} = 1$ for $t > k_2 + 1$ to again
    obtain a number $\alpha_2$ with continued fraction expansion $[0; a_{2,1}, a_{2,2}, \ldots]$. Repeating this
    procedure yields sequences $(k_t)$, $(r_t)$, $(s_t)$ and a number $\beta$ with continued fraction expansion
    $[0; b_1, b_2, \ldots]$ and subsequence $(p'_t/q'_t)_t$ of its convergents such that
    \begin{equation*}
      \lambda - \frac{b_{k_t + 1} + 2}{q'_{k_t}} < \frac{1}{2^t}
    \end{equation*}
    for all $t \geq 1$ (the numbers $a_{t,k_t + 1}$ will grow arbitrarily large since $\lambda > 0$). We conclude that
    \begin{equation*}
      \limsup_{t \to \infty} \frac{\abexp{\infty,\beta}{q'_{k_t}}}{q'_{k_t}} = \lambda,
    \end{equation*}
    so $\act{\infty}{\beta} \geq \lambda$. As we have constructed the sequence $(b_t)_t$ in such a way that $b_t = 1$
    whenever $k_i < t < k_{i+1}$ for some $i$, it follows for such $i$ and $t$ large enough that
    \begin{equation*}
      \frac{b_t + 2}{q'_{t-1}} \leq \frac{b_{k_i} + 2}{q'_{t-1}} < \frac{b_{k_i} + 2}{q'_{k_i - 1}} \leq \lambda.
    \end{equation*}
    Therefore $\act{\infty}{\beta} = \lambda$ and $\lambda \in \lags{\infty}$.
  \end{proof}

  \section{Additional Questions}\label{sec:examples}
  At the end of \autoref{ssec:eq_classes}, we asked if there exists infinite words for which the condition of
  \autoref{thm:pref_suff_ab_improved} on abelian equivalence is redundant. The next proposition tells that such binary
  words exist but that they are rather uninteresting.

  \begin{proposition}\label{prp:no_binary}
    Let $\infw{w}$ be an infinite binary word such that for each of its factors $u$ and $v$ of equal length we have
    $u \sim_1 v$ if they share a common prefix and a common suffix of length $1$. Then $\infw{w}$ is ultimately
    periodic.
  \end{proposition}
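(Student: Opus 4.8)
The plan is to assume $\infw{w}$ is aperiodic and derive a contradiction. First dispose of degenerate cases: if one letter occurs only finitely often, or if $\infw{w}$ is eventually alternating, then $\infw{w}$ is ultimately periodic; so I may assume both letters occur infinitely often and that $\infw{w}$ is not eventually periodic. The engine of the whole argument is the following. Fix $n \ge 2$ and for $i \ge 0$ write $c_i = \abs{w_i w_{i+1} \cdots w_{i+n-1}}_1$. Since a common prefix and suffix of length $\min\{n,1\} = 1$ just means equal first and last letters, the hypothesis says that $c_i = F(w_i, w_{i+n-1})$ for some function $F = F^{(n)}$ of two bits; on the other hand, trivially $c_{i+1} - c_i = w_{i+n} - w_i$. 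Comparing these two descriptions of consecutive windows is what forces rigidity.

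The first reduction is that $00$ and $11$ cannot both be factors. Indeed, if some factor of length $n+1$ began with $00$ and ended with $11$, say $w_i = w_{i+1} = 0$ and $w_{i+n-1} = w_{i+n} = 1$, then $(w_i, w_{i+n-1}) = (0,1) = (w_{i+1}, w_{i+n})$ forces $c_{i+1} = c_i$, whereas $c_{i+1} - c_i = w_{i+n} - w_i = 1$, a contradiction; the mirror pattern ($11$ then $00$) is excluded symmetrically. If both $00$ and $11$ occurred, their occurrences could neither overlap nor be adjacent, so one would start strictly before the other and the factor spanning them would begin with one pattern and end with the other, which is impossible. Hence at most one of $00, 11$ occurs, and after exchanging the two letters I may assume $\infw{w}$ contains no $11$.

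The second reduction is that $000$ occurs only finitely often. Suppose instead it recurs; choose a letter $1$ at some position $p \ge 1$ and a block $000$ at positions $s-1, s, s+1$ with $p < s-1$, and set $n = s - p + 1$. Then the length-$(n+1)$ window starting at $p$ begins with $10$ (as $w_p = 1$, $w_{p+1} = 0$) and ends with $00$ at $(s, s+1)$, while the one starting at $p-1$ begins with $01$ and ends with $00$ at $(s-1, s)$. Reading off the counting engine at $i = p$ gives $F(0,0) = F(1,0) - 1$, while reading it at $i = p-1$ gives $F(1,0) = F(0,0)$, a contradiction. Therefore a suffix of $\infw{w}$ avoids both $11$ and $000$; since the hypothesis passes to the factors of any suffix and ultimate periodicity is a tail property, I may assume that in $\infw{w}$ itself the gaps between consecutive occurrences of $1$ all equal $2$ or $3$.

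It remains to promote this to ultimate periodicity, and this is the crux. A word whose $1$-gaps lie in $\{2,3\}$ need not be periodic (the gap word over $\{2,3\}$ could itself be Sturmian), so the hypothesis must be squeezed further. The route I would take is to run the same endpoint-counting on windows bordered by the pattern $00$: the coincidences ``$01 \cdots 00$ versus $10 \cdots 00$'' and ``$00 \cdots 01$ versus $00 \cdots 10$'' of equal length are each self-contradictory in the same way as above, and tracking which of them the gap distribution is forced to avoid pins down the number of $3$-gaps in each window, eventually forcing the gap word to be balanced. A balanced aperiodic word is Sturmian, and for any Sturmian word one exhibits---exactly as with $00100$ and $01010$ in \autoref{ex:fibonacci}---two factors of equal length with matching first and last letters but different numbers of $1$s, contradicting the hypothesis; this is precisely the phenomenon behind the unattainability of the condition $2(k-1)\norm{\alpha} > 1$ at $k = 2$ in \autoref{thm:pref_suff_ab_improved}. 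I expect the main obstacle to be this gap-distribution bookkeeping needed to force balance, whereas the two reductions above are short and self-contained.
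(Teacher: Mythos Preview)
Your two reductions via the endpoint-counting function $F$ are correct and reach the same intermediate conclusion as the paper: a suffix of $\infw{w}$ is a concatenation of the blocks $10$ and $100$. The gap is entirely in the last step, which you leave as a plan. Forcing the gap word to be balanced and then contradicting the hypothesis via Sturmian theory may be feasible, but you have not carried it out, and the translation between windows of $\infw{w}$ and factors of the gap word is more delicate than your sketch suggests (equal-length factors of the gap word do not correspond to equal-length factors of $\infw{w}$, so the ``same endpoints $\Rightarrow$ same count'' constraint does not read off as balance in any obvious way).

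The paper's finish is a single direct comparison that bypasses balance entirely and fits into your own framework. If the pattern $100(10)^n100$ occurs for two distinct values $n<n'$, then $00(10)^n100$ (a suffix of the first pattern) and $0(10)^{n+1}10$ (a factor of $100(10)^{n'}100$) are both factors of $\infw{w}$; they have equal length $2n+5$, both begin and end with $0$, but contain $n+1$ and $n+2$ ones respectively, contradicting the hypothesis. Hence only one value of $n$ occurs, and the suffix is $(100(10)^n)^\omega$ or $(10)^\omega$. This is the missing idea: rather than aiming for balance of the gap word and then invoking Sturmian structure, one shows directly that the spacing between consecutive $100$-blocks is constant.
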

  \begin{proof}
    Suppose for a contradiction that $\infw{w}$ is aperiodic, so either $00$ or $11$ occurs in $\infw{w}$. By symmetry,
    we assume that $00$ is a factor of $\infw{w}$. If $0011$ occurs also, then $001$ and $011$ occur. This is
    impossible as then by our assumption we should have $001 \sim_1 011$; this is clearly absurd. Thus $0010^n1$ occurs
    in $\infw{w}$ for some $n \geq 1$. The factors $000$ and $010$ are also incompatible, so $000$ cannot occur in
    $\infw{w}$. Hence $101$ and $1001$ are the only possible factors of the form $10^n 1$ with $n \geq 1$. Since
    $(100)^\omega$ is not a suffix of $\infw{w}$, either $101$ occurs or $10011$ must occur. The latter case we already
    ruled out, so $101$ occurs meaning that $111$ is not a factor of $\infw{w}$. If $11$ is not a factor, then
    $\infw{w}$ has a suffix that is a concatenation of the words $10$ and $100$. Suppose then that $11$ is a factor.
    The only way this is possible is that we have an occurrence of $1011$. This means that we do not see the
    incompatible factor $1001$. Hence $00$ occurs only as a prefix of $\infw{w}$. We have concluded that $\infw{w}$ has
    a suffix that is a product of the words $01$ and $011$. Thus by mapping $\infw{w}$ with the coding $0 \mapsto 1$,
    $1 \mapsto 0$, we obtain a word satisfying the assumptions and which has a suffix that is a product of $10$ and
    $100$. Thus without loss of generality, we may assume that $\infw{w}$ has a suffix that is a product of $10$ and
    $100$.

    If $100(10)^n100$ occurs in $\infw{w}$ for two distinct values of $n$, then for some $m\geq 0$ both $00(10)^m100$
    and $0(10)^{m+1}10$ are factors of $\infw{w}$. By our assumption, we must have $00(10)^m100 \sim_1 0(10)^{m+1}10$,
    but this is false. Therefore $100(10)^n100$ can occur only for a single value $n$, and $\infw{w}$ must have either
    of the words $(10)^\omega$ or $(100(10)^n)^\omega$ as a suffix. This is a contradiction. 
  \end{proof}

  However, if we allow more than two letters, then aperiodicity is possible as is shown by the next proposition. Let
  $A$ and $B$ be alphabets. Recall that a substitution $f\colon A^* \to B^*$ is a mapping such that $f(uv) = f(u)f(v)$.
  The image of the infinite word $a_0 a_1 \cdots$ under $f$ is the infinite word $f(a_0) f(a_1) \cdots$. If $w = uv$,
  then by $wv^{-1}$ we mean the word $u$. In the next proof, we need to know some properties of Sturmian words; these
  can be found in \cite[Chapter~2]{2002:algebraic_combinatorics_on_words}. Firstly, Sturmian words are \emph{balanced}.
  This means that for each two factors $u$ and $v$ of equal length occurring in some Sturmian word, we have
  $\abs{\abs{u}_0 - \abs{v}_0} \leq 1$. Secondly in a Sturmian word, there exists exactly one right special factor of
  length $n$ for all $n \geq 0$. A factor $u$ of an infinite word $\infw{w}$ is \emph{right special} if $ua$ and $ub$
  occur in $\infw{w}$ for distinct letters $a$ and $b$.

  Let $\sigma$ be the substitution defined by $\sigma(0) = 02$, $\sigma(1) = 1$. It is easy to see that the word
  $\sigma(\infw{s})$ is aperiodic for any Sturmian word $\infw{s}$.

  \begin{proposition}\label{prp:ternary_example}
    Let $k \geq 2$ and $\infw{s}$ be a Sturmian word containing $00$. Let $u$ and $v$ be two factors of the same length
    occurring in $\sigma(\infw{s})$. Then $u \sim_k v$ if and only if they share a common prefix and a common suffix of
    length $\min\{\abs{u}, k-1\}$.
  \end{proposition}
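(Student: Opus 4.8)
The forward implication is general and needs no special structure: if $u \sim_k v$, then comparing the counts $\abs{u}_w = \abs{v}_w$ over all $w$ with $\abs{w} \le k$ one reads off that $u$ and $v$ must share a prefix and a suffix of length $\min\{\abs{u}, k-1\}$ (the easy half already recorded in the preliminaries). So the whole content is the converse, and since the case $\abs{u} \le k-1$ is trivial (a shared prefix of length $\abs{u}$ forces $u=v$), I may assume $\abs{u} = \abs{v} = m \ge k-1$ with $u,v$ agreeing on their first and last $k-1$ letters. The plan is to exploit the rigidity of $\infw{t} := \sigma(\infw{s})$: since $\sigma(0)=02$ and $\sigma(1)=1$, every $0$ is immediately followed by a $2$ and every $2$ immediately preceded by a $0$, so $\infw{t}$ is a concatenation of blocks $02$ and $1$ whose block sequence is exactly $\infw{s}$. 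Moreover, as $\infw{s}$ is a balanced Sturmian word containing $00$, the factor $11$ cannot occur in $\infw{s}$ (the length-$2$ factors $00,11$ would violate balance), so the occurrences of $1$ in $\infw{s}$ are isolated.

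I would split the converse into two claims: (A) $u$ and $v$ are abelian equivalent, $u \sim_1 v$; and (B) for factors of $\infw{t}$ sharing a prefix and suffix of length $k-1$, abelian equivalence already forces $\abs{u}_w = \abs{v}_w$ for all $\abs{w}=k$, i.e. $u \sim_k v$. Claim (B) is where the marker $2$ pays off: because the $2$'s sit at forced positions, every length-$k$ factor $w$ of $\infw{t}$ has a rigid block profile (an optional leading $2$, a run of full blocks $\sigma(\beta)$, and an optional trailing $0$), and counting occurrences of $w$ in a factor $x$ reduces to counting the block pattern $\beta$ in the block decoding of $x$, the finitely many boundary occurrences being pinned down by the shared prefix and suffix of length $k-1$. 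Feeding this through \autoref{prp:pref_suff_ab} applied to the Sturmian word $\infw{s}$ yields (B); the bookkeeping needed to match the length of the block pattern against the available shared prefix and suffix — and to handle the patterns that are long relative to $k$, where the uniqueness of the right special factor of $\infw{s}$ of each length enters — is the routine but delicate part of this step.

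The heart of the argument is claim (A), and I expect it to be the main obstacle. Decode $u$ and $v$ to factors $a,b$ of $\infw{s}$ by deleting a possible leading $2$ and trailing $0$ and collapsing each $02$ to $0$; since $u,v$ share their first and last letters, the same truncations are performed, and one checks that $\abs{a}_0-\abs{b}_0 = \abs{u}_0-\abs{v}_0$ while $\abs{\sigma(a)}=\abs{\sigma(b)}$. Suppose, for contradiction, that $\abs{u}_0 \ne \abs{v}_0$. A first application of the balance of $\infw{s}$ (extend the shorter decoding to the common length and compare $0$-counts) forces $\bigl|\abs{a}_0-\abs{b}_0\bigr|=1$, whence $\abs{a}$ and $\abs{b}$ differ by $1$, the $1$-counts differ by $2$, and — again by balance — both $a$ and $b$ must begin and end with $1$. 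Here $11 \notin \infw{s}$ becomes decisive: describing $a,b$ through their gaps (the numbers of $0$'s between consecutive $1$'s, all lying in a two-element set $\{d,d+1\}$ with $d \ge 1$), the hypothesised counts translate into two factors of the gap sequence whose numbers of long gaps differ by $2d+1 \ge 3$. But the gap sequence, being the sequence of return words of $\infw{s}$ to the letter $1$, is itself balanced (it is again a Sturmian word), so these two numbers, taken over windows whose lengths differ by $2$, can differ by at most $2$. This contradiction establishes (A).

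In summary, the skeleton is: reduce to $m \ge k-1$ with matching length-$(k-1)$ prefix and suffix; prove abelian equivalence (A) by decoding to $\infw{s}$ and applying balance twice — first to $\infw{s}$ and then to its return-word sequence, the second application being exactly where the standing hypothesis $00 \in \infw{s}$ (equivalently $11 \notin \infw{s}$) is used; and finally upgrade abelian equivalence to $k$-abelian equivalence (B) via the rigid block profile forced by the marker $2$ together with \autoref{prp:pref_suff_ab} for $\infw{s}$. The single genuinely hard point is ruling out the off-by-one discrepancy in $0$-counts, which plain balance of $\infw{s}$ cannot achieve and which the second, return-word balance resolves.
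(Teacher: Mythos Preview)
Your two-step skeleton --- first establish abelian equivalence (A), then upgrade to $k$-abelian equivalence (B) --- matches the paper's, but both steps are executed differently, and in each case the paper finds a shorter path.

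For (B), instead of decoding length-$k$ factors to block patterns in $\infw{s}$ and then invoking \autoref{prp:pref_suff_ab} for $\infw{s}$, the paper observes that the inductive counting argument underlying \autoref{prp:pref_suff_ab} uses only the existence of at most one right special factor per length, and checks this property directly for $\sigma(\infw{s})$: any right special factor must end in $2$ and hence desubstitutes to a right special factor of $\infw{s}$, so uniqueness is inherited. This eliminates your bookkeeping layer entirely.

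For (A), your double application of balance --- once on $\infw{s}$, once on the return-word sequence to $1$ --- does work, but the paper manages with a single use. Writing $aub=\sigma(x)$ and $avb=\sigma(y)$ with the same $a\in\{\varepsilon,0\}$, $b\in\{\varepsilon,2\}$ (determined by the shared first and last letters of $u$ and $v$), the identity $\abs{\sigma(w)}=\abs{w}+\abs{w}_0$ gives $\abs{x}+\abs{x}_0=\abs{y}+\abs{y}_0$. After stripping a common last letter if necessary so that $x$ and $y$ end in $0$ (this is where $11\notin\infw{s}$ enters), write the longer as $x=zt$ with $\abs{z}=\abs{y}$; then $\abs{t}+\abs{t}_0=\abs{y}_0-\abs{z}_0\le 1$ by balance of $\infw{s}$, forcing $t\in\{\varepsilon,1\}$, and $t=1$ is excluded since $x$ ends in $0$. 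So $\abs{x}=\abs{y}$ and $x\sim_1 y$ fall out at once, with no recourse to the derived Sturmian sequence. Your route has the merit of making the role of the hypothesis $00\in\infw{s}$ very visible, but the paper's normalization trick (forcing a terminal $0$) achieves the same exclusion with less machinery.
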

  \begin{proof}
    Suppose that $u$ and $v$ share a common prefix and a common suffix of length $\min\{\abs{u}, k-1\}$. We proceed as
    in the proof of \cite[Proposition~2.8]{2013:on_a_generalization_of_abelian_equivalence_and_complexity_of_infinite}
    (this is the proof of \autoref{prp:pref_suff_ab}). In this proof it is assumed that $u \sim_1 v$ and a counting
    argument is used to show that $u \sim_{\ell+1} v$ if $u \sim_\ell v$ for $1 \leq \ell < k$. By a careful analysis,
    it can be seen that this counting argument only uses the fact that there exists at most one right special factor of
    length $n$ for each $n$. Let $w$ and $w'$ be two right special factors of equal length occurring in
    $\sigma(\infw{s})$. It is clear that both $w$ and $w'$ must end with $2$. By the form of the substitution $\sigma$,
    there exist words $a$ and $b$ and unique factors $x$ and $y$ of $\infw{s}$ such that
    $a, b \in \{\varepsilon, 0\}$, $\abs{x} \geq \abs{y}$, $aw = \sigma(x)$, and $bw' = \sigma(y)$. Since $w$ and $w'$
    are right special, so are $x$ and $y$. It follows that $y$ is a suffix of $x$, so $w$ and $w'$ are suffixes of
    $\sigma(x)$. Since $\abs{w} = \abs{w'}$, they are equal. Thus we argued that $u \sim_k v$ if and only if they share
    a common prefix and a common suffix of length $\min\{\abs{u}, k-1\}$ and $u \sim_1 v$. Thus it suffices to show
    that $u \sim_1 v$.
    
    Like above, there exist words $a$ and $b$ and unique factors $x$ and $y$ of $\infw{s}$ such that
    $a \in \{\varepsilon, 0\}$, $b \in \{\varepsilon, 2\}$, $aub = \sigma(x)$, and $avb = \sigma(y)$. Let us show next
    that $x$ and $y$ are abelian equivalent. The claim follows from this. Since $k \geq 2$, the words $x$ and $y$ end
    in a common letter $c$. Now $x \sim_1 y$ if and only if $xc^{-1} \sim_1 yc^{-1}$ so, by replacing $x$ with
    $xc^{-1}$ and $y$ with $yc^{-1}$ if necessary, we may assume that $x$ and $y$ end with the letter $0$ ($1$ is
    always preceded by $0$ since $\infw{s}$ is balanced). For each binary word $w$, we have
    $\abs{\sigma(w)} = \abs{w} + \abs{w}_0$. Since $\abs{u} = \abs{v}$ (if $x$ and $y$ were replaced, we must replace
    $u$ and $v$ respectively by $\sigma(xc^{-1})$ and $\sigma(yc^{-1})$), we have
    \begin{equation}\label{eq:lengths}
      \abs{x} + \abs{x}_0 = \abs{y} + \abs{y}_0.
    \end{equation}
    Suppose without loss of generality that $\abs{x} \geq \abs{y}$, and write $x = zt$ with $\abs{z} = \abs{y}$. By
    plugging this into \eqref{eq:lengths}, we obtain that $\abs{t} + \abs{t}_0 = \abs{y}_0 - \abs{z}_0$. Since
    $\infw{s}$ is balanced, we see that $\abs{t} + \abs{t}_0 \leq 1$. Thus $t = \varepsilon$ or $t = 1$. The latter
    case is impossible as $x$ ends with $0$, so $t = \varepsilon$. Thus $\abs{x} = \abs{y}$ and so
    $\abs{x}_0 = \abs{y}_0$ by \eqref{eq:lengths}. This means that $x \sim_1 y$.
  \end{proof}

  Sturmian and episturmian words satisfy the property of \autoref{prp:pref_suff_ab} and it was shown in
  \cite{2018:on_the_k-abelian_complexity_of_the_cantor_sequence} that the Cantor word satisfies the property as well.
  The authors of \cite{2018:on_the_k-abelian_complexity_of_the_cantor_sequence} asked what sort of words satisfy this
  property. As we remarked above in the proof of \autoref{prp:ternary_example}, any infinite word having at most one
  right special factor of each length also satisfies this property. \autoref{prp:ternary_example} provides more
  examples of such words.

  \section{Further Open Problems}\label{sec:open_problems}
  It would be nice if our combinatorial generalization of the Lagrange spectrum had some number-theoretic
  interpretation, perhaps in connection to rational approximations of irrational numbers. We are unaware of such a
  connection.

  \begin{question}
    Is there an arithmetical characterization of the $k$-Lagrange spectrum $\lags{k}$?
  \end{question}

  An obvious open problem is to determine the $k$-abelian critical exponent of non-Sturmian infinite words. For
  example: what is the $k$-abelian critical exponent of the Tribonacci word, the fixed point of the substitution
  $0 \mapsto 012$, $1 \mapsto 02$, $2 \mapsto 0$? What about the Thue-Morse word? The case $k = 1$ is clear for the
  Thue-Morse word as the whole infinite word is an abelian power of infinite exponent and period $2$.

  Instead of looking at particular words or classes of words, it would be interesting to determine the set of critical
  exponents of all infinite words. In \cite{2007:every_real_number_greater_than_1}, Krieger and Shallit show that every
  real number greater than $1$ is a critical exponent of some infinite word. The result of Freiman shows that every
  real number greater than $c_F$ is the abelian critical exponent of some infinite word. Our result \autoref{thm:dense}
  shows that a dense subset of $(\tfrac{c_F}{2k-1}, \infty)$ is attainable as $k$-abelian critical exponents when
  $k > 1$. We are thus led to ask the following question.\footnote{The question has been solved in the positive; see
  \cite{2019:every_nonnegative_real_number_is_a_critical}.}

  \begin{question}
    Is every nonnegative real number the $k$-abelian critical exponent of some infinite word?
  \end{question}

  In \cite{2016:abelian_powers_and_repetitions_in_sturmian_words}, the abelian periods of factors of Sturmian words
  were studied (for definitions, see \cite{2016:abelian_powers_and_repetitions_in_sturmian_words}). It was proven for
  example that the abelian period of a factor of the Fibonacci word is always a Fibonacci number. Same sort of
  questions could be asked in the $k$-abelian setting for Sturmian words more generally. We have not attempted this
  study.

  \section*{Acknowledgments}
  We thank the referee for a careful reading of the paper, which improved the presentation.

  \printbibliography
	
\end{document}